\crefname{figure}{fig.}{figures}
\newcommand{\refcheckize}[1]{%
  \expandafter\let\csname @@\string#1\endcsname#1%
  \expandafter\DeclareRobustCommand\csname relax\string#1\endcsname[1]{%
    \csname @@\string#1\endcsname{##1}\wrtusdrf{##1}}%
  \expandafter\let\expandafter#1\csname relax\string#1\endcsname
}
\newcommand{\cycle}[8]{%
	\foreach \s in {1,...,#1}%
	{%
		\FPeval{\nodeidx}{clip(\s+#5)}
		\node[draw, circle,xshift=#6,yshift=#7] (#4\nodeidx) at ({360/#1 * (\s - 1)}:#2) {$\nodeidx$};%
		\ifnum #8=1%
			\draw[->, >=latex,xshift=#6,yshift=#7] ({360/#1 * (\s - 1)+#3}:#2)%
				arc ({360/#1 * (\s - 1)+#3}:{360/#1 * (\s)-#3}:#2);%
		\else%
		\fi%
	}%
}
\DeclareMathOperator{\size}{size}
\DeclareMathOperator{\cost}{cost}
\DeclareMathOperator{\diff}{diff}
\DeclareMathOperator{\sig}{sig}
\newcommand{\cp}{\ensuremath{\to}}
\newcommand{\greedy}{\textsc{Greedy}\xspace}
\newcommand{\rtglong}{Register Transfer Graph\xspace}
\newcommand{\rtgslong}{Register Transfer Graphs\xspace}
\newcommand{\rtg}{RTG\xspace}
\newcommand{\rtgs}{RTGs\xspace}
\newcommand{\prtg}{PRTG\xspace}
\newcommand{\prtgs}{PRTGs\xspace}
\newcommand{\odrtg}{outdegree\mbox{-1} \rtg}
\newcommand{\odrtgs}{outdegree\mbox{-1} \rtgs}
\newcommand{\Odrtgs}{Outdegree-1 \rtgs}
\newcommand{\tT}{\ensuremath{\tilde{T}}}
\let\doendproof\endproof
\renewcommand{\endproof}{\hfill\qed\doendproof}
\begin{document}

\mainmatter

\title{Optimal Shuffle Code with Permutation Instructions}

\titlerunning{Optimal Shuffle Code with Permutation Instructions}

\author{Sebastian Buchwald \and Manuel Mohr \and Ignaz Rutter}

\authorrunning{S. Buchwald \and M. Mohr \and I. Rutter}

\pagestyle{plain}

\institute{Karlsruhe Institute of Technology\\
\email{\{sebastian.buchwald, manuel.mohr, rutter\}@kit.edu}}

\maketitle

\begin{abstract}
  During compilation of a program, register allocation is the task of mapping program variables to machine registers.
  During register allocation, the compiler may introduce \emph{shuffle code}, consisting of copy and swap operations, that transfers data between the registers.
  Three common sources of shuffle code are conflicting register mappings at joins in the control flow of the program, e.g, due to if-statements or loops; the calling convention for procedures, which often dictates that input arguments or results must be placed in certain registers; and machine instructions that only allow a subset of registers to occur as operands.

  Recently, Mohr et al.~\cite{mohr13cases} proposed to speed up
  shuffle code with special hardware instructions that arbitrarily
  permute the contents of up to five registers and gave a heuristic
  for computing such shuffle codes.

  In this paper, we give an efficient algorithm for generating optimal
  shuffle code in the setting of Mohr et al.  An interesting special
  case occurs when no register has to be transferred to more than one
  destination, i.e., it suffices to permute the contents of the
  registers.  This case is equivalent to factoring a permutation into
  a minimal product of permutations, each of which permutes up to five
  elements.
\end{abstract}

\section{Introduction}

One of the most important tasks of a compiler during code generation
is register allocation, which is the task of mapping program variables
to machine registers.  During this phase, it is frequently
necessary to insert so-called \emph{shuffle code} that transfers
values between registers.  Common reasons for the insertion of shuffle
code are control flow joins, procedure calling conventions and
constrained machine instructions.

The specification of a shuffle code, i.e., a description which
register contents should be transferred to which registers, can be
formulated as a directed graph whose vertices are the registers and an
edge $(u,v)$ means that the content of $u$ before the execution of the
shuffle code must be in $v$ after the execution.  Naturally, every
vertex must have at most one incoming edge.  Note that vertices may
have several outgoing edges, indicating that their contents must be
transferred to several destinations, and even loops $(u,u)$,
indicating that the content of register $u$ must be preserved.  We
call such a graph a \emph{\rtglong} or \emph{\rtg}.  Two important
special types of \rtgs are \odrtgs where the maximum out-degree is~$1$
and \prtgs where $\deg^-(v) = \deg^+(v) = 1$ for all vertices $v$
($\deg^-$ and $\deg^+$ denote the in- and out-degree of a vertex,
respectively).

We say that a shuffle code, consisting of a sequence of
copy and swap operations on the registers, \emph{implements} an \rtg
if after the execution of the shuffle code every register whose
corresponding vertex has an incoming edge has the correct content.
The \emph{shuffle code generation} problem asks for a shortest shuffle
code that implements a given \rtg.

The amount of shuffle code directly depends on the quality of copy
coalescing, a subtask of register allocation~\cite{mohr13cases}.  As copy
coalescing is NP-complete~\cite{Bouchez2007}, reducing the amount of
shuffle code is expensive in terms of compilation time, and thus
cannot be afforded in all contexts, e.g., just-in-time compilation.

Therefore, it has been suggested to allow more complicated operations
than simply copying and swapping to enable more efficient
shuffle code.  Mohr et al.~\cite{mohr13cases}
propose to allow performing permutations on the contents of small sets
of up to five registers.  The processor they develop
offers three instructions to implement shuffle code:
\begin{compactenum}
\item[\texttt{copy:}] copies the content of one register to another one
\item[\texttt{permi5:}] cyclically shifts the contents of up to five registers
\item[\texttt{permi23:}] swaps the contents of two registers and
  performs a cyclic shift of the contents of up to three registers; the
  two sets of registers must be disjoint.
\end{compactenum}

\noindent In fact, the two operations \texttt{permi5} and \texttt{permi23}
together allow to arbitrarily permute the contents of up to five
registers in a single operation.  A corresponding hardware and a
modified compiler that employs a greedy approach to generate the
shuffle code have been shown to improve performance in
practice~\cite{mohr13cases}.  While the greedy heuristic works well in
practice, it does not find an optimal shuffle code in all cases.

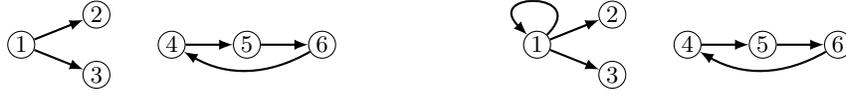
\begin{figure}[t]
	\centering
	\begin{subfigure}[t]{.46\textwidth}
		\centering
		\begin{tikzpicture}[every node/.style={draw, circle, inner sep=1pt}]
			\node (a1) at (1,0)    {$1$};
			\node (a2) at (2,0.4)  {$2$};
			\node (a3) at (2,-0.4) {$3$};
			\node (a4) at (3,0)    {$4$};
			\node (a5) at (4,0)    {$5$};
			\node (a6) at (5,0)    {$6$};

			\draw[-latex,thick]           (a1) to (a2);
			\draw[-latex,thick]           (a1) to (a3);
			\draw[-latex,thick]           (a4) to (a5);
			\draw[-latex,thick]           (a5) to (a6);
			\draw[-latex,thick,bend left] (a6) to (a4);
		\end{tikzpicture}
	\end{subfigure}
	\hfill
	\begin{subfigure}[t]{.46\textwidth}
		\centering
		\begin{tikzpicture}[every node/.style={draw, circle, inner sep=1pt}]
			\node (a1) at (1,0)    {$1$};
			\node (a2) at (2,0.4)  {$2$};
			\node (a3) at (2,-0.4) {$3$};
			\node (a4) at (3,0)    {$4$};
			\node (a5) at (4,0)    {$5$};
			\node (a6) at (5,0)    {$6$};

			\draw[-latex,thick,loop]      (a1) to (a1);
			\draw[-latex,thick]           (a1) to (a2);
			\draw[-latex,thick]           (a1) to (a3);
			\draw[-latex,thick]           (a4) to (a5);
			\draw[-latex,thick]           (a5) to (a6);
			\draw[-latex,thick,bend left] (a6) to (a4);
		\end{tikzpicture}
	\end{subfigure}
	\caption{Two example \rtgs where the optimal shuffle code is not obvious.}
	\label{fig:outline-example}
\end{figure}

It is not obvious how to generate optimal shuffle code using the three
instructions \texttt{copy}, \texttt{permi5} and \texttt{permi23} even
for small \rtgs.  In the left \rtg from \Cref{fig:outline-example}, a
naive solution would implement edges $(1, 2)$ and $(1, 3)$ using
copies and the remaining cycle $(4~5~6)$ using a \texttt{permi5}.
However, using one \texttt{permi23} to implement the cycle $(4~5~6)$
and swap registers $1$ and $2$, and then copying register $2$ to $3$
requires only two instructions.  This is legal because the contents of
register $1$ can be overwritten.  The same trick is not applicable for
the right \rtg in \Cref{fig:outline-example} because of the loop $(1,
1)$ and hence three instructions are necessary to implement that \rtg.

A maximum permutation size of~$5$ may seem arbitrary at first but is
a consequence of instruction encoding constraints.  In
each \texttt{permi} instruction, the register numbers and their
order must be encoded in the instruction word.  Hence,
$\lceil \log_2{({n \choose k} k!)} \rceil$ bits of an instruction
word are needed to be able to encode all permutations of $k$
registers out of $n$ total registers.  As many machine architectures
use a fixed size for instruction words, e.g., 32 or 64~bits, and
the operation type must also be encoded in the instruction word,
space is very limited.  In fact, for a 32~bit instruction word, $34$~is
the maximum number of registers that leave enough space for the
operation type.

\subsubsection*{Related Work.}
\label{sec:related-work}

As long as only copy and swap operations are allowed, finding an optimal
shuffle code for a given \rtg is a straightforward
task~\cite[p. 56--57]{HackThesis}.  Therefore work in the area of
compiler construction in this context has focused on coalescing
techniques that reduce the number and the size of
\rtgs~\cite{Bouchez2007,Grund2007,Hack2008,Blazy2009}.

From a theoretical point of view, the most closely related work studies
the case where the input \rtg consists of a union of disjoint directed
cycles, which can be interpreted as a permutation $\pi$.  Then, no copy
operations are necessary for an optimal shuffle code and hence the
problem of finding an optimal shuffle code using
\texttt{permi23} and \texttt{permi5} is equivalent to writing $\pi$ as
a shortest product of permutations of maximum size~$5$, where a
permutation of $n$ elements has size~$k$ if it fixes $n-k$ elements.

There has been work on writing a permutation as a product of
permutations that satisfy certain restrictions.  The factorization
problem on permutation groups from computational group
theory~\cite{seress2003} is the task of writing an element
$g$ of a permutation group as a product of given generators $S$.
Hence, an algorithm for solving the factorization problem could be
applied in our context by using all possible permutations of size $5$ or
less as the set $S$.  However, the algorithms do not guarantee
minimality of the product.  For the case that $S$ consists of all
permutations that reverse a contiguous subsequence of the elements,
known as the pancake sorting problem, it has been shown that computing
a factoring of minimum size is NP-complete~\cite{Caprara97}.

Farnoud and Milenkovic~\cite{Farnoud2012} consider a weighted version
of factoring a permutation into transpositions.  They present a
polynomial constant-factor approximation algorithm for factoring a
given permutation into transpositions where transpositions have
arbitrary non-negative costs.  For the case that the transposition
costs are defined by a path-metric, they show how to compute a
factoring of minimum weight in polynomial time.  In our problem, we
cannot assign costs to an individual transposition as its cost is
context-dependent, e.g., four transpositions whose product is a cycle
require one operation, whereas four arbitrary transpositions may
require two.

\subsubsection*{Contribution and Outline.}

In this paper, we present an efficient algorithm for generating optimal
shuffle code using the operations \texttt{copy, permi5}, and
\texttt{permi23}, or equivalently, using copy operations and
permutations of size at most~$5$.

We first prove the existence of a special type of optimal shuffle
codes whose copy operations correspond to edges of the input \rtg in
Section~\ref{sec:register-transfer-graphs}.  Removing the set of edges
implemented by copy operations from an \rtg leaves an \odrtg.

We show that the greedy algorithm proposed by Mohr et
al.~\cite{mohr13cases} finds optimal shuffle codes for \odrtgs and
that the size of an optimal shuffle code can be expressed as a
function that depends only on three characteristic numbers of the
\odrtg rather than on its structure.  Since \prtgs are a special
case of \odrtgs, this shows that \greedy is a linear-time algorithm
for factoring an arbitrary permutation into a minimum number of
permutations of size at most~$5$.

Finally, in Section~\ref{sec:general-case}, we show how to compute an
optimal set of \rtg edges that will be implemented by copy operations
such that the remaining \odrtg admits a shortest shuffle code.  This
is done by several dynamic programs for the cases that the input
\rtg is disconnected, is a tree, or is connected and contains a
(single) cycle.

\section{\rtgslong and Optimal Shuffle Codes}
\label{sec:register-transfer-graphs}

In this section, we rephrase the shuffle code generation problem as a
graph problem.  An \rtg that has only self-loops needs
no shuffle-code and is called \emph{trivial}.

It is easy to define the effect of a permutation on an \rtg.
Let $G$ be an \rtg and let $\pi$ be an arbitrary
permutation that is applied to the contents of the registers.  We
define $\pi G = (V, \pi E)$, where $\pi E = \{ (\pi(u), v) \mid (u,v)
\in E\}$.  This models the fact that if $v$ should receive the data
contained in $u$, then after $\pi$ moves the data contained in $u$ to
some other register $\pi(u)$, the data contained in $\pi(u)$ should
end up in $v$.  We observe that for two permutations $\pi_1,\pi_2$ of
$V$, it is $(\pi_2 \circ \pi_1)G = \pi_2(\pi_1(G))$, i.e., we have
defined a group action of the symmetric group on \rtgs.
For \prtgs, the shuffle code generation problem asks for a shortest shuffle
code that makes the given \prtg trivial.

Unfortunately, it is not possible to directly express copy
operations in \rtgs.  Instead, we rely on the following observation.
Consider an arbitrary shuffle code that contains a copy $a
\cp b$ with source $a$ and target $b$ that is followed by a
transposition $\tau$ of the contents of registers $c$ and $d$.  We can
replace this sequence by a transposition of the registers $\{c,d\}$
and a copy $\tau(a) \cp \tau(b)$.  Thus, given a sequence of
operations, we can successively move the copy operations to the end of
the sequence without increasing its length.  Thus, for
any \rtg there exists a shuffle code that consists of a pair of
sequences $((\pi_1,\dots,\pi_p), (c_1,\dots,c_t) )$, where the
$\pi_i$ are permutation operations and the $c_i$ are
copy operations.  We now strengthen our assumption on the copy
operations.

\begin{lemma}
  \label{lem:shuffle-normalized}
  Every instance of the shuffle code generation problem has an optimal
  shuffle code $((\pi_1,\dots,\pi_p), (c_1,\dots,c_t))$ such that
  \begin{compactenum}[(i)]
  \item No register occurs as both a source and a target of copy
    operations.
  \item Every register is the target of at most one copy operation.
  \item There is a bijection between the copy operations $c_i$ and the
    edges of $\pi G$ that are not loops, where $\pi = \pi_p \circ
    \pi_{p-1} \circ \cdots \circ \pi_1$.
  \item If $u$ is the source of a copy operation, then $u$ is incident
    to a loop in $\pi G$.
  \item The number of copies is $\sum_{v \in V} \max \{ \deg_G^+(v) - 1,
    0\}$.
  \end{compactenum}
\end{lemma}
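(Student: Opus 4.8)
The plan is to start from the shuffle code guaranteed by the discussion preceding the lemma, namely a code of the form $((\pi_1,\dots,\pi_p),(c_1,\dots,c_t))$ in which all permutation operations precede all copy operations, and then transform it by a sequence of length-non-increasing local modifications until all five properties hold. Let $\pi = \pi_p \circ \cdots \circ \pi_1$ be the combined permutation, so that after the permutation phase the \rtg that the copies must still realize is exactly $\pi G$: an edge $(u,v)$ of $G$ becomes $(\pi(u),v)$, and the copies must place the content currently in $\pi(u)$ into $v$ for every non-loop edge. The main conceptual point is that once the permutations are fixed, the copy phase is just the problem of realizing the edges of $\pi G$ by copies, and I can reason about this phase in isolation.

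First I would establish (iii) and (v) together, since they pin down what the copies must do. A loop $(u,u)$ in $\pi G$ means register $u$ already holds its target content after the permutation phase, so it needs no copy; every non-loop edge $(\pi(u),v)$ records a required data movement and can be implemented by a single copy from its source to $v$. Conversely, any target $v$ with an incoming non-loop edge must receive its content by a copy. I would argue that an optimal code uses exactly one copy per non-loop edge of $\pi G$: it needs at least this many because distinct targets need distinct copies and a target cannot be served except by copying, and it needs no more because serving each such edge once suffices. This gives the bijection in (iii) and, by summing over vertices, the count in (v): vertex $v$ is the target of $\deg_G^+$-many edges collapsing appropriately—more precisely each vertex $u$ has $\deg_G^+(u)$ outgoing edges in $G$, one of which may be realized ``for free'' by the permutation while the remaining $\max\{\deg_G^+(u)-1,0\}$ must be copies. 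I would make this accounting precise by observing that the permutation can be chosen so that one outgoing edge of each non-sink vertex is turned into a loop.

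Next I would handle (ii) and (iv), which constrain where copies originate. Property (ii)—each register is the target of at most one copy—follows immediately from (iii) combined with the standing assumption (stated before Lemma~\ref{lem:shuffle-normalized}) that every vertex has at most one incoming edge, since the edges of $\pi G$ inherit this in-degree bound. For (iv), suppose a copy has source $u$ but $u$ is \emph{not} incident to a loop in $\pi G$. Then the content of $u$ is not needed to be preserved, so I can redirect: instead of copying out of $u$ and then later having $u$'s content overwritten, I reorder so that the copy reads from a register whose content is genuinely preserved, or absorb the movement into the permutation phase. The cleanest way to force (iv) is to reroute copies along chains—if $u$ is a copy source but is itself the target of another copy, I can compose the two copies—so that copy sources bottom out at registers that are fixed points of the data flow, i.e.\ loop-incident vertices.

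Finally, property (i), that no register is simultaneously a source and a target of copies, I would derive from (iv) together with (ii): by (iv) every copy source is loop-incident in $\pi G$, while by (iii) every copy target corresponds to a genuine non-loop edge and hence (given the at-most-one-incoming-edge property) is not loop-incident, so the source set and target set are disjoint. The ordering of the argument matters: I would prove (iii) and (v) first to fix the copy count, then (ii), then (iv) by the rerouting/composition argument, and obtain (i) as a corollary. I expect the rerouting step establishing (iv) to be the main obstacle, because I must verify that composing or redirecting copies preserves correctness of the final register contents and does not increase the total length; the key invariant to check is that whenever I replace a copy reading from an overwritten register by one reading from a preserved register, the transported datum is unchanged, which is exactly where the ``move copies past transpositions'' identity $a \cp b$ followed by $\tau$ equals $\tau$ followed by $\tau(a)\cp\tau(b)$ from the preceding paragraph does the real work.
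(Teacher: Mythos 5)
There is a genuine gap, and it is structural: your plan proves the properties in an order that is circular. You establish (iii) and (v) first by claiming that one copy per non-loop edge of $\pi G$ is both necessary and sufficient; the necessity (each target needs its own copy) is fine, but the sufficiency is false unless the sources of those edges keep their contents throughout the copy phase, which is exactly property (iv) (equivalently (i)) -- the thing you prove \emph{last}. Concretely, if the non-loop edges of $\pi G$ contain a 2-cycle $(a,b),(b,a)$, no two copies implement it: the first copy destroys one of the two values, and three copies through a free register are required. So ``serving each such edge once suffices'' presupposes the conclusion, and your accounting for (v) then rests entirely on the unproven ``observation'' that the permutation can be chosen so that one outgoing edge of every non-sink vertex becomes a loop -- that observation essentially \emph{is} the lemma. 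Moreover, (iv) and (v) genuinely do not follow from (i)--(iii): the \rtg with edges $(1,2),(1,3)$ has an optimal code with $\pi = \mathrm{id}$ and copies $1 \cp 2$, $1 \cp 3$, which satisfies (i), (ii) and the bijection (iii) but violates (iv) and (v). This example also shows what your proof never sets up: the paper fixes an optimal code that, among all optimal codes, has the \emph{minimum number of copy operations}; without this extremal assumption none of the exchange arguments yields a contradiction.

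The second missing piece is the mechanism of those exchanges. The paper's proof works on the first index $k$ where (i)/(ii) fails and the last earlier copy $c_i$ touching the conflicting register, and distinguishes three cases. Two of them match your ``compose a chain of copies'' and ``delete a copy whose target is later overwritten'' reroutings. But the essential case -- a register $w$ that is first the \emph{source} of a copy $w \cp x$ and later the \emph{target} of a copy $v \cp w$ -- cannot be fixed by rerouting or composing copies at all: the paper replaces the two copies by a cyclic shift of $v,w,x$ (a new permutation operation) plus the single copy $w \cp v$, strictly decreasing the copy count and contradicting minimality of $t$. Similarly, your ``copy sources bottom out at loop-incident vertices'' is not something chain composition can deliver (in the example above, register $1$ is the source of copies, is the target of none, and is still not loop-incident); the paper instead argues that such a source has in-degree $0$ in $\pi G$, so the copy can be replaced by a swap, again reducing the copy count. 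Both replacements -- a copy pair by a permutation plus a copy, and a copy by a swap -- are the real content of the proof, and your sketch gestures at ``absorbing movement into the permutation phase'' without identifying either one.
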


\begin{proof}
  Consider an optimal shuffle code of the form $((\pi_1,\dots,\pi_p),
  (c_1,\dots,c_t))$ as above and assume that the number $t$ of copy
  operations is minimal among all optimal shuffle codes.

  Suppose there exists a register that occurs as both a source and a
  target of copy operations or a register that occurs as the target of
  more than one copy operation.  Let $k$ be the smallest index such
  that in the sequence $c_1,\dots,c_k$ there is a register occurring
  as both a source and a target or a register that occurs as a target
  of two copy operations.  We show that we can modify the sequence of
  copy operation such that the length of the prefix without such
  registers increases.  Inductively, we then obtain a sequence without
  such registers.  Let $v$ and $w$ denote the source and target of
  $c_k$, respectively.  Let $i$ denote the largest index such that
  $c_i$ is a copy operation that has $w$ as a source or target or such
  that $c_i$ is a copy operation with target $v$.  We distinguish
  three cases based on whether $c_i$ has target $v$, target $w$, or
  source $w$.

  \begin{figure}[tb]
    \centering
    \begin{subfigure}[b]{.3\textwidth}
      \centering
      \includegraphics[page=1]{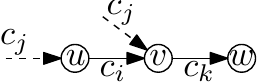}
      \caption{}
      \label{fig:copy-properties-a}
    \end{subfigure}
    \hfill
    \begin{subfigure}[b]{.3\textwidth}
      \centering
      \includegraphics[page=2]{fig/copy-properties}
      \caption{}
      \label{fig:copy-properties-b}
    \end{subfigure}
    \hfill
    \begin{subfigure}[b]{.3\textwidth}
      \centering
      \includegraphics[page=3]{fig/copy-properties}
      \caption{}
      \label{fig:copy-properties-c}
    \end{subfigure}
    \caption{Illustration of the proof of
      Lemma~\ref{lem:shuffle-normalized}.  The copies $c_j$ with
      $i<j<k$ along the dashed edges would contradict either the
      choice of $i$ or $k$.}
    \label{fig:copy-properties}
  \end{figure}

  \emph{Case 1:} The target of $c_i$ is $v$; see
  Fig.~\ref{fig:copy-properties-a}.  Let $u$ denote the source of
  operation $c_i$.  The sequence first copies a value from $u$ to $v$
  and from there to $w$.  We then replace $c_k$ by a copy with source
  $u$ and target $w$.  (If $u=w$, we omit the operation altogether.)
  This only changes the outcome of the shuffle code if the value
  contained in $u$ or $v$ is modified between operations $c_i$ and
  $c_k$, i.e., if there exists a copy operation $c_j$ with $i<j<k$
  whose target is either $u$ or $v$.  But then already the smaller
  sequence $c_1,\dots,c_j$ has $u$ occur as both a source and a target
  or $v$ as a target of two operations, contradicting the minimality
  of $k$.

  \emph{Case 2:} The target of $c_i$ is $w$; see
  Fig.~\ref{fig:copy-properties-b}.  In this case the copy operation
  $c_i$ copies a value to $w$ and later this value is overwritten by
  the operation $c_k$.  Note that by the choice of $i$ there is no
  operation $c_j$ with $i<j<k$ with source $w$.  Thus, omitting the
  copy operation $c_i$ does not change the outcome of the shuffle
  code.  A contradiction to optimality.

  \emph{Case 3:} The source of $c_i$ is $w$; see
  Fig.~\ref{fig:copy-properties-c}.  Let $x$ denote the target of
  operation $c_i$.  In this case first a value is copied from $w$ to
  $x$ and later the value in $v$ is copied to $w$.  We claim that no
  copy operation $c_j$ with $i<j<k$ involves $x$ or $w$.  If $x$
  occurs as the source of $c_j$ (as the target of $c_j$), then $x$
  occurs as a source and target (two times as a target) in the
  sequence $c_1,\dots,c_j$, contradicting the minimality of $k$.  If
  $w$ is the target of $c_j$, then $w$ occurs as a source and a target
  in the sequence $c_1,\dots,c_j$, contradicting the choice of $k$.
  If $w$ is the source of $c_j$ we have a contradiction to the choice
  of $i$.  This proves the claim.  We can thus, without changing the
  outcome of the shuffle code move the operation $c_i$ immediately
  before the operation $c_k$.  Then our sequence contains consecutive
  copy operations $w \cp x$ and $v \cp w$.  Replace these two
  operations by a cyclic shift of $v$, $w$ and $x$ and a copy
  operation $w \cp v$.  This decreases the number of copy operations
  by 1 and thus contradicts the minimality of $t$.

  Altogether, in each case, we have either found a contradiction to
  the optimality of the shuffle code, to the minimality of the number
  of copy operations or we have succeeded in producing a shuffle code
  that has a longer prefix satisfying properties (i) and (ii).
  Inductively, we obtain a shuffle code satisfying both (i) and (ii).
  Fix such a code.  Since no register is both source and target of a
  copy operation, the copy operations are commutative and can be
  reordered arbitrarily without changing the result.

  For property (iii) first observe that the only way to transfer a
  value from $u$ to $v$ is via a copy operation $u \cp v$.  This is
  due to the fact that the shuffle code is correct, that no node
  occurs as both a source and a target of copy operations, and that
  $\pi$ only permutes the values in the initial registers but does not
  duplicate them.  Thus, for every edge there must be a corresponding
  copy operation.  Conversely, this number of copy operations
  certainly suffices for a correct shuffle code for $\pi G$.

  For property (iv) consider a copy operation from $u$ to $v$ such
  that $u$ is not incident to a loop.  If the in-degree of $v$ in $\pi
  G$ were~1, then there would be an incoming edge, which would
  correspond to a copy operation with target $u$, which is not
  possible by property (i).  Thus, $u$ has in-degree~0.  But then, the
  contents of $u$ are irrelevant and we can replace the copy from $u$
  to $v$ by an operation that swaps the contents of $u$ and $v$,
  resulting in a shuffle code with fewer copy operations.

  By property (iv) every vertex that is the source of an edge in $\pi
  G$ is incident to a loop.  Hence $\sum_{v \in V} \max\{\deg_{\pi
    G}^+(v)-1,0\}$ is the number of non-loop edges in $\pi G$, which
  is the same as the number of copy operations by property (iii).
  Note that by definition $\pi$ only permutes the out-degrees of the
  vertices, and hence $\sum_{v \in V} \max\{\deg_{\pi G}^+(v)-1,0\} =
  \sum_{v \in V} \max\{\deg_{G}^+(v)-1,0\}$.  This shows property (iv)
  and finishes the proof.
\end{proof}

We call a shuffle code satisfying the conditions of
Lemma~\ref{lem:shuffle-normalized} \emph{normalized}.  Observe that
the number of copy operations used by a normalized shuffle code is a
lower bound on the number of necessary copy operations since
permutations, by definition, only permute values but never create
copies of them.

Consider now an \rtg $G$ together with a normalized optimal shuffle code and one of its copy operations $u \cp v$.
Since the code is normalized, the value transferred to $v$ by this copy operation is the one that stays there after the shuffle code has been executed.
If $v$ had no incoming edge in $G$, then we could shorten the shuffle by omitting the copy operation.
Thus, $v$ has an incoming edge $(u',v)$ in $G$, and we associate the copy $u \cp v$ with the edge $(u',v)$ of $G$.
In fact, $u' = \pi^{-1}(u)$, where $\pi = \pi_p \circ \cdots \circ \pi_1$.
In this way, we associate every copy operation with an edge of the input \rtg.
In fact, this is an injective mapping by \Cref{lem:shuffle-normalized} (ii).

\begin{lemma}
  \label{lem:cut-copies}
  Let $((\pi_1,\dots,\pi_p), (c_1,\dots,c_t))$ be an optimal shuffle code $S$
  for an \rtg $G=(V,E)$ and let $C \subseteq E$ be the edges
  that are associated with copies in $S$.  Then

  \begin{compactenum}[(i)]
  \item Every vertex $v$ has $\max\{\deg_{G}^+(v) - 1, 0\}$ outgoing edges in
      $C$.
  \item $G-C$ is an \odrtg.
  \item $\pi_1,\dots,\pi_p$ is an optimal shuffle code for $G-C$.
  \end{compactenum}
\end{lemma}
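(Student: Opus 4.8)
The plan is to assume that $S$ is normalized (\Cref{lem:shuffle-normalized}), so that the association of copies to edges is well defined and injective. Writing $\pi=\pi_p\circ\cdots\circ\pi_1$, recall that a copy is associated with the edge $(w,v)\in E$ exactly when $(\pi(w),v)$ is a non-loop edge of $\pi G$; hence $C=\{(w,v)\in E:\pi(w)\neq v\}$ and $E\setminus C=\{(w,v)\in E:\pi(w)=v\}$, and the association identifies $C$ bijectively with the copies, so $|C|=t$. I will prove (i) by a counting argument, deduce (ii) immediately, and establish (iii) by first checking that $\pi_1,\dots,\pi_p$ implements $G-C$ and then proving optimality by an exchange argument.

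For (i), fix a vertex $w$ and let $c(w)$ denote the number of its outgoing edges in $C$. The outgoing edges of $w$ have distinct targets, and at most one of them can have target $\pi(w)$; since the outgoing edges of $w$ lying in $E\setminus C$ are precisely those with target $\pi(w)$, we get $c(w)\ge\deg_G^+(w)-1$ and trivially $c(w)\ge 0$, that is, $c(w)\ge\max\{\deg_G^+(w)-1,0\}$. Summing over all vertices, $\sum_w c(w)$ equals the total number of copies, which by \Cref{lem:shuffle-normalized}(v) is $\sum_w\max\{\deg_G^+(w)-1,0\}$. As each summand on the left dominates the corresponding summand on the right, equality of the sums forces equality term by term, proving (i). Property (ii) is then immediate: each vertex $v$ has $\deg_G^+(v)-\max\{\deg_G^+(v)-1,0\}=\min\{\deg_G^+(v),1\}\le 1$ outgoing edges in $G-C$, and since $G-C\subseteq G$ its in-degrees stay at most one, so $G-C$ is an \odrtg.

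For (iii), I first observe that $\pi_1,\dots,\pi_p$ implements $G-C$: every edge $(w,v)\in E\setminus C$ satisfies $\pi(w)=v$, so under the group action it becomes the loop $(v,v)$, i.e.\ $\pi(G-C)$ is trivial. In particular the optimal length for $G-C$ is at most $p$. For the reverse inequality I argue by contradiction. Since $G-C$ is an \odrtg, \Cref{lem:shuffle-normalized}(v) yields an optimal shuffle code for $G-C$ using permutations only; suppose its length is $q<p$, with permutations composing to $\rho$ and $\rho(G-C)$ trivial. I then build a shuffle code for $G$ by running $\rho_1,\dots,\rho_q$ and appending one copy $\rho(w)\cp v$ for every non-loop edge $(\rho(w),v)$ of $\rho G$. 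Because $\rho$ trivializes $G-C$, every such non-loop edge is the image of a distinct edge of $C$, so the number of appended copies is at most $|C|=t$, giving total length at most $q+t<p+t=|S|$ and contradicting optimality of $S$.

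The main obstacle is verifying that the appended copies actually implement $\rho G$ in some order, i.e.\ that every copy target can be written without first destroying a source that is still needed. I expect to handle this by showing that the non-loop edges of $\rho G$ contain no directed cycle. Indeed, $\rho G$ inherits in-degree at most one from $G$, and by (i) every vertex $w$ with $\deg_G^+(w)\ge 1$ has exactly one outgoing edge in $E\setminus C$, whose target equals $\rho(w)$. A directed cycle among the non-loop edges would give consecutive edges $(\rho(w_i),v_i)$ with $v_i=\rho(w_{i+1})$, forcing both the $C$-edge $(w_i,v_i)$ and the $(E\setminus C)$-edge $(w_{i+1},v_i)$ to end in $v_i$, contradicting $\deg_G^-(v_i)\le 1$. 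Acyclicity lets me process the copies in an order in which every source is read before it is overwritten, so exactly the non-loop edges of $\rho G$ are realized with that many copies, completing the exchange argument and hence the proof.
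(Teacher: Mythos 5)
Your proof is correct and takes essentially the same route as the paper's: the same per-vertex lower bound combined with Lemma~\ref{lem:shuffle-normalized}(v) to force equality in (i), (ii) as an immediate consequence, and the same exchange argument for (iii), where a hypothetical shorter permutation-only code for $G-C$ plus one copy per non-loop edge of the permuted \rtg contradicts optimality. You are in fact more careful than the paper in one spot: the paper simply asserts that creating a copy for each non-loop edge yields a valid shuffle code, whereas you verify (via the in-degree-1 property of $G$) that these copy edges are acyclic and can therefore be scheduled so that no source is overwritten before it is read.
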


\begin{proof}
  For property (i) observe that, since permuting the register contents
  does not duplicate values, it is necessary that at least
  $\max\{\deg_{G}^+(v)-1,0\}$ of the edges of $v$ are implemented by
  copy operations and thus are in $C$.  By property (v) of
  Lemma~\ref{lem:shuffle-normalized} the number of copy operations is
  exactly the sum of these values, which immediately implies that
  equality holds at every vertex.

  Property (ii) follows immediately from property (i).

  Finally, for property (iii), suppose there is a shorter optimal
  shuffle code $\pi_1',\dots,\pi_{p'}'$ with $p' < p$ for $G-C$.  Let
  $\pi' = \pi_{p'}' \circ \cdots \circ \pi_1'$.  Then $\pi' G$ has
  $|C|$ edges that are not loops and by creating a copy operation for
  each of them we obtain a shorter shuffle code.  This is a
  contradiction to the optimality of the original shuffle code.  Hence
  property (iii) holds.
\end{proof}

Lemma~\ref{lem:cut-copies} shows that an optimal shuffle code for an
\rtg $G$ can be found by first picking for each vertex one of
its outgoing edges (if it has any) and removing the remaining edges from
$G$, second finding an optimal shuffle code for the resulting \odrtg, and
finally creating one copy operation for each of the previously removed
edges. Fig.~\ref{fig:copy-removal} shows that the choice of the
outgoing edges is crucial to obtain an optimal shuffle code.

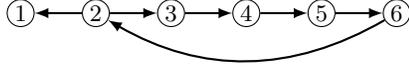
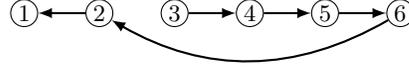
\begin{figure}[t]
	\centering
	\begin{subfigure}[t]{.46\textwidth}
		\centering
		\begin{tikzpicture}[every node/.style={draw, circle, inner sep=1pt}]
			\node (a1) at (1,0) {$1$};
			\node (a2) at (2,0) {$2$};
			\node (a3) at (3,0) {$3$};
			\node (a4) at (4,0) {$4$};
			\node (a5) at (5,0) {$5$};
			\node (a6) at (6,0) {$6$};

			\draw[-latex,thick]           (a2) to (a1);
			\draw[-latex,thick]           (a2) to (a3);
			\draw[-latex,thick]           (a3) to (a4);
			\draw[-latex,thick]           (a4) to (a5);
			\draw[-latex,thick]           (a5) to (a6);
			\draw[-latex,thick,bend left] (a6) to (a2);
		\end{tikzpicture}
		\caption{The original \rtg $G$ needs one permutation and one copy operation.}
	\end{subfigure}
	\hfill
	\begin{subfigure}[t]{.46\textwidth}
		\centering
		\begin{tikzpicture}[every node/.style={draw, circle, inner sep=1pt}]
			\node (a1) at (1,0) {$1$};
			\node (a2) at (2,0) {$2$};
			\node (a3) at (3,0) {$3$};
			\node (a4) at (4,0) {$4$};
			\node (a5) at (5,0) {$5$};
			\node (a6) at (6,0) {$6$};

			\draw[-latex,thick]           (a2) to (a1);
			\draw[-latex,thick]           (a3) to (a4);
			\draw[-latex,thick]           (a4) to (a5);
			\draw[-latex,thick]           (a5) to (a6);
			\draw[-latex,thick,bend left] (a6) to (a2);
		\end{tikzpicture}
		\caption{After removing the edge $(2,3)$, the \rtg needs two permutation operations.}
	\end{subfigure}
	\caption{The \rtg $G$ obtains the normalized optimal shuffle code ($\pi_1$, $c_1$), where $\pi_1 = (23456)$ and $c_1=3 \cp 1$.
	         However, after removing the edge $(2,3)$ (instead of $(1,2)$) we cannot achieve an optimal solution anymore.}
	\label{fig:copy-removal}
\end{figure}

In the following, we first show how to compute an optimal shuffle code
for an \odrtg in \Cref{sec:path-cycle-rtgs}.  Afterwards, in
\Cref{sec:general-case}, we design an algorithm for efficiently
determining a set of edges to be removed such that the resulting
\odrtg admits a shuffle code with the smallest number of operations.

\section{Optimal Shuffle Code for \Odrtgs}
\label{sec:path-cycle-rtgs}

In this section we prove the optimality of the greedy algorithm
proposed by Mohr et al.~\cite{mohr13cases} for \odrtgs.  Before we
formulate the algorithm, let us look at the effect of applying a
transposition $\tau = (u~v)$ to contiguous vertices of a $k$-cycle
$K = (V_K, E_K)$ in a \prtg $G$, where $k$-cycle denotes a cycle
of size $k$.
Hence, $u, v \in V_K$ and $(u, v) \in E_K$.  Then, in $\tau G$,
the cycle $K$ is replaced by a
$(k-1)$-cycle and a vertex $v$ with a loop.  We say that
$\tau$ has reduced the size of $K$ by~$1$.  If $\tau K$ is trivial, we
say that $\tau$ resolves $K$.  It is easy to see that \texttt{permi5}
reduces the size of a cycle by up to $4$ and \texttt{permi23}
reduces the sizes of two distinct cycles by~$1$ and up to~$2$, respectively.
We can now formulate \greedy as follows.
\begin{compactenum}
\item Complete each directed path of the input \odrtg into a directed cycle,
  thereby turning the input into a \prtg.
\item While there exists a cycle $K$ of size at least~$4$, apply a
  \texttt{permi5} operation to reduce the size of $K$ as much as
  possible.
\item While there exist a 2-cycle and a 3-cycle,
  resolve them with a \texttt{permi23} operation.
\item Resolve pairs of 2-cycles by \texttt{permi23} operations.
\item Resolve triples of 3-cycles by pairs of \texttt{permi23}
  operations.
\end{compactenum}

We claim that \greedy computes an optimal shuffle code.  Let $G$ be an
\odrtg and let $Q$ denote the set of paths and cycles of $G$.  For a path or cycle
$\sigma \in Q$, we denote by $\size(\sigma)$ the number of vertices of
$\sigma$.  Define $X = \sum_{\sigma \in Q} \lfloor \size(\sigma)/4
  \rfloor$ and $a_i = | \{ \sigma \in Q \mid \size(\sigma) = i \mod
  4\}|$ for $i=2,3$.  We call the triple $\sig(G) = (X,a_2,a_3)$ the
  \emph{signature} of $G$.

\begin{lemma}
  \label{lem:greedy-cost}
  Let $G$ be an \odrtg with $\sig(G) = (X,a_2,a_3)$.
  The number $\greedy(G)$ of operations in the shuffle code produced
  by the greedy algorithm is
  $\greedy(G) = X + \max\{\lceil (a_2+a_3)/2 \rceil, \lceil (a_2 + 2
  a_3)/3 \rceil\}$.
\end{lemma}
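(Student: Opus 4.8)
The plan is to track the multiset of cycle sizes as it evolves through the five phases of \greedy, counting \texttt{permi5} and \texttt{permi23} operations separately, and to show that the \texttt{permi5} count equals $X$ while the \texttt{permi23} count equals $\max\{\lceil (a_2+a_3)/2 \rceil, \lceil (a_2 + 2a_3)/3 \rceil\}$. First I would observe that step~1 turns every path of size $s$ into an $s$-cycle, so it changes no sizes and introduces no operations; thus after step~1 the instance is a \prtg whose cycle sizes are exactly $\{\size(\sigma) \mid \sigma \in Q\}$, and every subsequent operation is either a \texttt{permi5} (steps~2) or a \texttt{permi23} (steps~3--5).

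For step~2 I would analyze a single cycle of size $s$ in isolation, since each \texttt{permi5} acts within one cycle. Because a \texttt{permi5} reduces the size of a cycle by exactly $4$ when $s \ge 5$ and resolves a $4$-cycle in one operation, repeatedly applying it to a cycle of size $s$ uses precisely $\lfloor s/4 \rfloor$ operations and leaves a cycle of residual size $s \bmod 4$ (a trivial cycle when $s \equiv 0,1$, a $2$-cycle when $s \equiv 2$, and a $3$-cycle when $s \equiv 3 \pmod 4$). Summing over $Q$ gives exactly $X = \sum_{\sigma \in Q} \lfloor \size(\sigma)/4 \rfloor$ \texttt{permi5} operations, and after step~2 the residual \prtg consists of exactly $a_2$ two-cycles and $a_3$ three-cycles by the definition of $\sig(G)$.

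It then remains to count the \texttt{permi23} operations in steps~3--5 as a function of $a_2$ and $a_3$, splitting on whether $a_2 \ge a_3$. Step~3 performs $\min\{a_2,a_3\}$ operations, each resolving one $2$-cycle and one $3$-cycle, and leaves cycles of a single type. If $a_2 \ge a_3$, the remaining $a_2-a_3$ two-cycles are resolved in pairs (step~4), costing $\lceil (a_2-a_3)/2 \rceil$ operations, for a total of $a_3 + \lceil (a_2-a_3)/2 \rceil = \lceil (a_2+a_3)/2 \rceil$. If $a_3 > a_2$, the remaining $a_3-a_2$ three-cycles are resolved three at a time by pairs of \texttt{permi23} (step~5), with a short residue check showing that $m$ three-cycles always cost $\lceil 2m/3 \rceil$ operations; this yields $a_2 + \lceil 2(a_3-a_2)/3 \rceil = \lceil (a_2+2a_3)/3 \rceil$. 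Since the ceiling is monotone and $(a_2+a_3)/2 \ge (a_2+2a_3)/3$ holds precisely when $a_2 \ge a_3$, each case realizes the corresponding term of the maximum, and adding the $X$ \texttt{permi5} operations gives the claimed formula.

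I expect the main obstacle to be the bookkeeping in steps~4 and~5, specifically establishing that a leftover block of $m$ three-cycles requires exactly $\lceil 2m/3 \rceil$ operations: one must argue both that pairs of \texttt{permi23} suffice for each triple and that the two possible nonzero residues $m \bmod 3 \in \{1,2\}$ are handled optimally by one and two additional operations, respectively (a single \texttt{permi23} resolves one $3$-cycle via its cyclic-shift part, but cannot resolve two at once). The two ceiling identities and the monotonicity argument collapsing the case split into the single $\max$ expression are then routine algebra.
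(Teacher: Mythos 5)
Your proposal is correct and takes essentially the same route as the paper's proof: step~1 preserves the signature, step~2 costs exactly $X$ \texttt{permi5} operations, and the \texttt{permi23} count is obtained by the same case split on $a_2 \ge a_3$ versus $a_3 > a_2$ (yielding $a_3 + \lceil (a_2-a_3)/2\rceil = \lceil (a_2+a_3)/2\rceil$ and $a_2 + \lceil 2(a_3-a_2)/3\rceil = \lceil (a_2+2a_3)/3\rceil$, respectively), collapsed into the maximum via monotonicity of the ceiling. Your write-up is merely a bit more explicit where the paper says ``clearly,'' namely in the per-cycle accounting of step~2 and the $\lceil 2m/3\rceil$ residue check for leftover $3$-cycles.
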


\begin{proof}
  After the first step we have a \prtg with the same signature as $G$.
  Clearly, \greedy produces exactly $X$ operations for reducing all
  cycle sizes below 4.  Afterwards, only \texttt{permi23} operations
  are used to resolve the remaining cycles of size~2 and~3.

  If $a_2 \ge a_3$, then first $a_3$ operations are used to resolve
  pairs of cycles of size~2 and~3.  Afterwards, the remaining $a_2-a_3$
  cycles of size~2 are resolved by using $\lceil (a_2 - a_3)/2 \rceil$
  operations.  In total, these are $\lceil (a_2 + a_3)/2 \rceil$
  operations.

  If $a_3 \ge a_2$, then first $a_2$ operations are used to resolve
  pairs of cycles of size~2 and~3.  Afterwards, the remaining $a_3 -
  a_2$ cycles of size~3 are resolved by using $\lceil 2(a_3 - a_2)/3
  \rceil$ operations.  In total, these are $\lceil (a_2 + 2a_3)/3
  \rceil$ operations.

  We observe that $(a_2 + a_3)/2 \le (a_2 + 2a_3)/3$ holds if and only
  if $a_2 \le a_3$ and that equality holds for $a_2 = a_3$.  Since
  $\lceil \cdot \rceil$ is a monotone function, this implies that the
  total cost produced by the last part of the algorithm is $\max
  \{\lceil (a_2 + a_3)/2 \rceil, \lceil (a_2 + 2a_3)/3 \rceil \}$.
\end{proof}

In particular, the length of the shuffle code computed by \greedy only
depends on the signature of the input \rtg $G$.  In the remainder of
this section, we prove that \greedy is optimal for
\odrtgs and therefore the formula in Lemma~\ref{lem:greedy-cost} actually
computes the length of an optimal shuffle code.

\begin{lemma}
  \label{lem:indicator-function}
  Let $G, G'$ be \prtgs with $\sig(G) = (X,a_2,a_3)$,
  $\sig(G') = (X',a_2',a_3')$ and $\greedy(G) - \greedy(G')
  \ge c$, and let $(\Delta_X,\Delta_2,\Delta_3) = \sig(G) - \sig(G')$.
  If $a_2 \ge a_3$, then $2\Delta_X + \Delta_2 + \Delta_3 \le -2c + 1$.
  If $a_3 > a_2$, then $3\Delta_X + \Delta_2 + 2\Delta_3 \le -3c + 2$.
\end{lemma}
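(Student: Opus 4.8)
My plan is to prove both inequalities with a single two-sided scheme: tie the linear combination $2X+a_2+a_3$ (first case), respectively $3X+a_2+2a_3$ (second case), to twice, respectively three times, the greedy cost, spending the rounding slack only on the $G$-side, where the case hypothesis tells us which branch of the maximum in Lemma~\ref{lem:greedy-cost} is active. Concretely, in the case $a_2 \ge a_3$ I would use $\greedy(G) = X + \lceil (a_2+a_3)/2 \rceil$ to write
\begin{equation*}
2X + a_2 + a_3 = 2\greedy(G) - \bigl(2\lceil (a_2+a_3)/2 \rceil - (a_2+a_3)\bigr) \ge 2\greedy(G) - 1,
\end{equation*}
the parenthesised parity defect lying in $\{0,1\}$. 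In the case $a_3 > a_2$ the active branch is $\greedy(G) = X + \lceil (a_2+2a_3)/3 \rceil$, and the identical computation modulo $3$ gives $3X + a_2 + 2a_3 \ge 3\greedy(G) - 2$, the residue defect now lying in $\{0,1,2\}$.

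On the $G'$-side I would deliberately avoid any case distinction, since the lemma places no hypothesis on $\sig(G')$. The point I would exploit is that each argument of the maximum dominates itself once the ceiling is dropped, so whichever branch realises $\greedy(G')$ we have both $\greedy(G') \ge X' + (a_2'+a_3')/2$ and $\greedy(G') \ge X' + (a_2'+2a_3')/3$. Clearing denominators yields the two rounding-free estimates $2X'+a_2'+a_3' \le 2\greedy(G')$ and $3X'+a_2'+2a_3' \le 3\greedy(G')$, valid unconditionally.

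Subtracting the matching pair of one-sided estimates then finishes the proof. In the first case
\begin{equation*}
2\Delta_X + \Delta_2 + \Delta_3 = (2X+a_2+a_3) - (2X'+a_2'+a_3') \ge (2\greedy(G)-1) - 2\greedy(G') \ge 2c-1,
\end{equation*}
where the last step is the hypothesis $\greedy(G)-\greedy(G') \ge c$; the second case is word-for-word the same with the factor-$3$ estimates and gives $3\Delta_X + \Delta_2 + 2\Delta_3 \ge 3c-2$.

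I expect the only real subtlety to be the asymmetry just used: one cannot mirror the tight $G$-side accounting on the $G'$-side, because without knowing whether $a_2' \ge a_3'$ the gap between $2X'+a_2'+a_3'$ and $2\greedy(G')$ is unbounded (a union of many $3$-cycles already makes it arbitrarily large), so the proof must put the rounding loss only where the case hypothesis pins the active branch and use the weaker but sign-correct bound on the other side. Finally, a bookkeeping point on orientation: with the stated convention $\Delta=\sig(G)-\sig(G')$ the estimates come out as the lower bounds $2\Delta_X+\Delta_2+\Delta_3 \ge 2c-1$ and $3\Delta_X+\Delta_2+2\Delta_3 \ge 3c-2$ derived above; the displayed forms $\le -2c+1$ and $\le -3c+2$ are exactly these two facts read with $\Delta$ negated, i.e.\ with $\Delta=\sig(G')-\sig(G)$, and the sign of $\Delta$ in the statement should be taken consistently with that reading.
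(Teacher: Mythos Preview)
Your argument is essentially the paper's own proof, just reorganised: both bound $\greedy(G)$ above via $\lceil x\rceil \le x + (\text{rounding defect})$ on the branch pinned down by the case hypothesis, bound $\greedy(G')$ below via $\lceil x\rceil \ge x$ on whichever branch realises the maximum, and subtract. Your closing remark on orientation is also correct: the paper's proof silently computes with $\Delta = \sig(G')-\sig(G)$ (writing $X' = X+\Delta_X$, etc.), and this is the convention under which the displayed inequalities and the later $\Psi$-tables are consistent, so the line $(\Delta_X,\Delta_2,\Delta_3)=\sig(G)-\sig(G')$ in the statement is a sign slip.
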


\begin{proof}
  We assume that $\greedy(G) - \greedy(G') \ge c$ and start with the
  case that $a_2 \ge a_3$.  By Lemma~\ref{lem:greedy-cost} and basic
  calculation rules for $\lceil \cdot \rceil$, we have the following.
  \begin{align*}
    \greedy(G) &= X + \lceil (a_2 + a_3)/2 \rceil \le X+ (a_2+a_3+1)/2\\
    \greedy(G') &\ge X' + \lceil (a_2' + a_3')/2 \rceil \ge X+\Delta_X
    + (a_2 + a_3 + \Delta_2 + \Delta_3)/2
  \end{align*}
  Therefore, their difference computes to
  \begin{align*}
    \greedy(G) - \greedy(G') &\le -\Delta_X - (\Delta_2+\Delta_3-1)/2\\
    &= - (2\Delta_X + \Delta_2+\Delta_3 -1)/2.
  \end{align*}
  By assumption, we thus have $- (2\Delta_X + \Delta_2+\Delta_3 -1)/2
  \ge c$, or equivalently $2\Delta_X + \Delta_2 + \Delta_3 \le -2c +
  1$.

  Now consider the case $a_3 > a_2$.  By Lemma~\ref{lem:greedy-cost},
  we have the following.
  \begin{align*}
    \greedy(G) &= X + \lceil (a_2 + 2a_3)/3 \rceil \le X+ (a_2+2a_3+2)/3\\
    \greedy(G') &\ge X' + \lceil (a_2' + 2a_3')/3 \rceil \ge
    X+\Delta_X + (a_2 + 2a_3 + \Delta_2 + 2\Delta_3)/3
  \end{align*}
  Similar to above, their difference computes to
  \begin{align*}
    \greedy(G) - \greedy(G') &\le -\Delta_X -
    (\Delta_2+2\Delta_3-2)/3\\
    &= - (3\Delta_X + \Delta_2+2\Delta_3 -2)/3.
  \end{align*}
  Similarly as above, by assumption we have $- (3\Delta_X +
  \Delta_2+2\Delta_3 -2)/3 \ge c$, which is equivalent to $3\Delta_X +
  \Delta_2 +2 \Delta_3 \le -3c+2$.
\end{proof}

Lemma~\ref{lem:indicator-function}
gives us necessary conditions for
when the \greedy solutions of two \rtgs differ by some value $c$.
These necessary conditions depend only on the difference of the two
signatures.  To study them more precisely, we define
$\Psi_1(\Delta_X,\Delta_2,\Delta_3) = 2\Delta_X + \Delta_2 + \Delta_3$
and $\Psi_2(\Delta_X, \Delta_2, \Delta_3) = 3\Delta_X + \Delta_2 + 2
\Delta_3$.  Next, we study the effect of a single transposition on
these two functions.

Let $G = (V,E)$ be a \prtg with $\sig(G) = (X,a_2,a_3)$ and let $\tau$
be a transposition of two elements in $V$.  We distinguish cases based
on whether the swapped elements are in different connected components
or not.  In the former case, we say that $\tau$ is a \emph{merge}, in
the latter we call it a \emph{split}; see
Fig.~\ref{fig:split-merge-illustration} for an illustration.

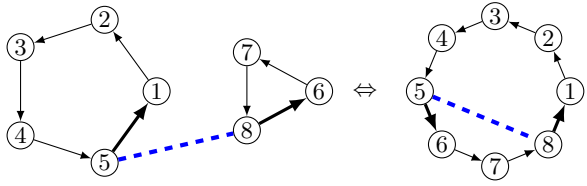
\begin{figure}[tb]
	\centering
	\begin{tikzpicture}[every node/.style={draw, circle, inner sep=1pt}]
		\node (a1) at (0:1cm) {$1$};
		\node (a2) at (72:1cm) {$2$};
		\node (a3) at (144:1cm) {$3$};
		\node (a4) at (216:1cm) {$4$};
		\node (a5) at (288:1cm) {$5$};

		\draw[->, -latex] (a1) to (a2);
		\draw[->, -latex] (a2) to (a3);
		\draw[->, -latex] (a3) to (a4);
		\draw[->, -latex] (a4) to (a5);
		\draw[very thick, ->, -latex] (a5) to (a1);

		\node[xshift=2.5cm] (b1) at (0:0.66cm) {$6$};
		\node[xshift=2.5cm] (b2) at (120:0.6cm) {$7$};
		\node[xshift=2.5cm] (b3) at (240:0.6cm) {$8$};

		\draw[->, -latex] (b1) to (b2);
		\draw[->, -latex] (b2) to (b3);
		\draw[very thick, ->, -latex] (b3) to (b1);

		\draw[ultra thick, blue, dashed] (a5) to (b3);

		\node[draw=none, right=2mm of b1] (arr) {$\Leftrightarrow$};

		\node[xshift=5.5cm] (c1) at (0:1cm) {$1$};
		\node[xshift=5.5cm] (c2) at (45:1cm) {$2$};
		\node[xshift=5.5cm] (c3) at (90:1cm) {$3$};
		\node[xshift=5.5cm] (c4) at (135:1cm) {$4$};
		\node[xshift=5.5cm] (c5) at (180:1cm) {$5$};
		\node[xshift=5.5cm] (c6) at (225:1cm) {$6$};
		\node[xshift=5.5cm] (c7) at (270:1cm) {$7$};
		\node[xshift=5.5cm] (c8) at (315:1cm) {$8$};

		\draw[->, -latex] (c1) to (c2);
		\draw[->, -latex] (c2) to (c3);
		\draw[->, -latex] (c3) to (c4);
		\draw[->, -latex] (c4) to (c5);
		\draw[very thick, ->, -latex] (c5) to (c6);
		\draw[->, -latex] (c6) to (c7);
		\draw[->, -latex] (c7) to (c8);
		\draw[very thick, ->, -latex] (c8) to (c1);

		\draw[ultra thick, blue, dashed] (c5) to (c8);
	\end{tikzpicture}
	\caption{The transposition $\tau = (5~8)$ acting on \prtgs. Affected edges are drawn thick. Read from left to right, the transposition is a merge; read from right to left, it is a split.}
	\label{fig:split-merge-illustration}
\end{figure}

We start with the merge operations as they are a bit simpler.  When
merging two cycles of size $s_1$ and $s_2$, respectively, they are
replaced by a single cycle of size $s_1+s_2$.  Note that removing the
two cycles may decrease the values $a_2$ and $a_3$ of the signature by
at most~$2$ in total.  On the other hand, the new cycle can potentially
increase one of these values by~$1$.  The value $X$ never decreases, and
it increases by~$1$ if and only if $s_1 \mod 4 + s_2 \mod 4 \ge 4$.
Table~\ref{tab:merge-effects} shows the possible signature changes
$(\Delta_X, \Delta_2,\Delta_3)$ resulting from a merge.  The entry
in row $i$ and column $j$ shows the result of merging two cycles whose
sizes modulo $4$ are $i$ and $j$, respectively.  \Cref{tab:merge-costs}
shows the corresponding values of $\Psi_1$ and $\Psi_2$.
Only entries with $i \le j$ are shown, the remaining cases are symmetric.

\begin{table}[tb]
	\centering
	\begin{subfigure}[b]{.5\textwidth}
		\centering
		\begin{tabular}{c||c|c|c|c}
				& $0$         & $1$         & $2$          & $3$ \\
			\hline
			\hline
			$0$ & $(0, 0, 0)$ & $(0, 0, 0)$ & $(0, 0, 0)$  & $(0, 0, 0)$ \\
			\hline
			$1$ &             & $(0, 1, 0)$ & $(0, -1, 1)$ & $(1, 0, -1)$ \\
			\hline
			$2$ &             &             & $(1, -2, 0)$ & $(1, -1, -1)$ \\
			\hline
			$3$ &             &             &              & $(1, 1, -2)$ \\
		\end{tabular}
		\caption{Signature change $(\Delta_X, \Delta_2,
                  \Delta_3)$.}
		\label{tab:merge-effects}
	\end{subfigure}
        \hfill
	\begin{subfigure}[b]{0.45\textwidth}
          \centering
		\begin{tabular}{c||c|c|c|c}
			& $0$ & $1$ & $2$ & $3$ \\
			\hline
			\hline
			$0$ & $0$ & $0$ & $0$ & $0$ \\
			\hline
			$1$ &     & $1$ & $0$ & $1$ \\
			\hline
			$2$ &     &     & $0$ & $0$ \\
			\hline
			$3$ &     &     &     & $1$ \\
		\end{tabular}
                \qquad\qquad
		\begin{tabular}{c||c|c|c|c}
				& $0$ & $1$ & $2$ & $3$ \\
			\hline
			\hline
			$0$ & $0$ & $0$ & $0$ & $0$ \\
			\hline
			$1$ &     & $1$ & $1$ & $1$ \\
			\hline
			$2$ &     &     & $1$ & $0$ \\
			\hline
			$3$ &     &     &     & $0$ \\
		\end{tabular}
		\caption{Values of $\Psi_1$ (left) and $\Psi_2$ (right).}
		\label{tab:merge-costs}
	\end{subfigure}
	\caption{Signature changes and $\Psi$ values for merges.  Row
          and column are the cycle sizes modulo $4$ before the merge.}
	\label{tab:merges}
\end{table}

\begin{lemma}
  \label{lem:merge-effect}
  Let $G$ be a \prtg with $\sig(G) = (X,a_2,a_3)$ and let $\tau$
  be a merge.  Then $\greedy(G) \le \greedy(\tau G)$.
\end{lemma}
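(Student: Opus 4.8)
The plan is to reduce the whole statement to the two linear functions $\Psi_1,\Psi_2$ together with the sign information already collected in Table~\ref{tab:merge-costs}. The starting point is an identity that is implicit in Lemma~\ref{lem:greedy-cost} and that absorbs the integer $X$ into the ceiling. Since $(a_2+a_3)/2 \ge (a_2+2a_3)/3$ holds exactly when $a_2 \ge a_3$ (as noted at the end of the proof of Lemma~\ref{lem:greedy-cost}), monotonicity of $\lceil\cdot\rceil$ shows that in the branch $a_2\ge a_3$ the maximum in Lemma~\ref{lem:greedy-cost} is attained by the first term, so that $\greedy(G) = X + \lceil (a_2+a_3)/2\rceil = \lceil \Psi_1(\sig(G))/2\rceil$, using $\Psi_1(\sig(G)) = 2X+a_2+a_3$. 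Symmetrically, $\greedy(G) = \lceil \Psi_2(\sig(G))/3\rceil$ when $a_3>a_2$.

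Next I would record the two one-sided bounds that hold for \emph{every} \prtg, namely $\greedy(G') \ge \lceil \Psi_1(\sig(G'))/2\rceil$ and $\greedy(G') \ge \lceil \Psi_2(\sig(G'))/3\rceil$. Each follows by keeping only one of the two terms inside the maximum of Lemma~\ref{lem:greedy-cost} and then pulling $X'$ into the ceiling as above. The crucial feature is that both bounds are available no matter which branch $G'=\tau G$ itself falls into, so I am free to choose the function that matches the branch of $G$.

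Writing $\Delta = \sig(\tau G) - \sig(G)$ and using that $\Psi_1,\Psi_2$ are linear, I would then split into the two cases determined by $G$. If $a_2\ge a_3$, the first step gives $\greedy(G)=\lceil \Psi_1(\sig(G))/2\rceil$, while the second step gives $\greedy(\tau G)\ge \lceil \Psi_1(\sig(\tau G))/2\rceil = \lceil (\Psi_1(\sig(G))+\Psi_1(\Delta))/2\rceil$; since Table~\ref{tab:merge-costs} records $\Psi_1(\Delta)\ge 0$ for every merge, monotonicity of the ceiling yields $\greedy(\tau G)\ge \greedy(G)$. The case $a_3>a_2$ is identical with $\Psi_2$ in place of $\Psi_1$ and divisor $3$ in place of $2$, again invoking $\Psi_2(\Delta)\ge 0$ from the table.

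The only genuine content is the nonnegativity $\Psi_1(\Delta)\ge 0$ and $\Psi_2(\Delta)\ge 0$ for all merge types, which I would take as established by the case inspection summarised in Tables~\ref{tab:merge-effects} and~\ref{tab:merge-costs}: the size-$0$ merges produce $\Delta=(0,0,0)$, and the remaining entries are read directly off the table. The one step that must be handled carefully is the asymmetry in the argument: one lower-bounds $\greedy(\tau G)$ by the term tied to the branch of $G$, \emph{not} of $\tau G$. This is legitimate precisely because $\greedy(\tau G)$ is a maximum over both terms and therefore dominates either of them individually, so no separate case analysis on the branch of $\tau G$ is needed.
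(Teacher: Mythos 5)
Your proposal is correct and takes essentially the same route as the paper: both arguments rest on the formula of Lemma~\ref{lem:greedy-cost}, the case split on whether $a_2 \ge a_3$ holds in $G$, the trick of bounding $\greedy(\tau G)$ from below by the $\Psi$-term matching the branch of $G$ rather than of $\tau G$, and the nonnegativity of $\Psi_1$ and $\Psi_2$ on all merge signature changes recorded in Table~\ref{tab:merge-costs}. The only difference is presentational: the paper argues by contradiction, invoking Lemma~\ref{lem:indicator-function} as a black box, whereas you inline a ceiling-exact version of that computation to get a direct proof.
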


\begin{proof}
  Suppose $\greedy(\tau G) < \greedy(G)$.  Then $\greedy(G) -
  \greedy(\tau G) \ge 1$ and by Lemma~\ref{lem:indicator-function}
  either $\Psi_1 \le -1$ or $\Psi_2 \le -1$.  However,
  Table~\ref{tab:merge-costs} shows the values of $\Psi_1$ and
  $\Psi_2$ for all possible merges.  In all cases it is $\Psi_1,
  \Psi_2 \ge 0$.  A contradiction.
\end{proof}

In particular, the lemma shows that merges never decrease the cost of
the greedy solution, even if they were for free.  We now make a
similar analysis for splits.  It is, however, obvious that splits
indeed may decrease the cost of greedy solutions.  In fact, one can
always split cycles in a \prtg until it is trivial.

First, we study again the effect of splits on the signature change
$(\Delta_X, \Delta_2,\Delta_3)$.  Since a split is an inverse of a
merge, we can essentially reuse Table~\ref{tab:merge-effects}.  If
merging two cycles whose sizes modulo $4$ are $i$ and $j$, respectively,
results in a signature change of $(\Delta_X,\Delta_2,\Delta_3)$, then,
conversely, we can split a cycle whose size modulo $4$ is $i+j$ into
two cycles whose sizes modulo $4$ are $i$ and $j$, respectively, such
that the signature change is $(-\Delta_X, -\Delta_2, -\Delta_3)$,
and vice versa.  Note that given a cycle whose size
modulo $4$ is $s$ one has to look at all cells $(i,j)$ with
$i+j \equiv s~(\mathrm{mod~}4)$
to consider all the possible signature changes.  Since $\Psi_1, \Psi_2$
are linear, negating the signature change also negates the
corresponding value.  Thus, we can reuse \Cref{tab:merge-costs}
for splits by negating each entry.

\begin{lemma}
  \label{lem:cycle-decrease-psi}
  Let $G = (V,E)$ be a \prtg and let $\pi$ be a cyclic shift of $c$
  vertices in $V$.  Let further $(\Delta_X,\Delta_2,\Delta_3)$ be the
  signature change affected by $\pi$.  Then
  $\Psi_1(\Delta_X,\Delta_2,\Delta_3) \ge - \lceil (c-1)/2 \rceil$ and
  $\Psi_2(\Delta_X,\Delta_2, \Delta_3) \ge - \lceil (3c-3)/4 \rceil$.
\end{lemma}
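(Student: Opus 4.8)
The plan is to reduce the statement about a general cyclic shift of $c$ vertices to a statement about splits, using the fact that applying a cyclic shift $\pi$ of $c$ elements to a \prtg is equivalent to performing a sequence of at most $c-1$ transpositions. More precisely, since $\pi$ is a $c$-cycle (acting on the contents), it can be written as a product of $c-1$ transpositions, each of which acts on the \prtg as a split or a merge in the sense already developed. First I would decompose $\pi$ into transpositions $\tau_1, \dots, \tau_{c-1}$ and track the cumulative signature change; by linearity of $\Psi_1$ and $\Psi_2$, the total change $(\Delta_X,\Delta_2,\Delta_3)$ has $\Psi_j$ equal to the sum of the per-transposition $\Psi_j$ values. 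By \Cref{lem:merge-effect}'s underlying computation, merges contribute $\Psi_1,\Psi_2 \ge 0$, so the only way to make $\Psi_j$ negative is through splits, whose $\Psi_j$ values are the negated entries of \Cref{tab:merge-costs}.

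The key quantitative step is then to bound how negative the sum of split contributions can be over a sequence of transpositions that realizes a single $c$-cycle. Here I would use two facts. First, from the negated \Cref{tab:merge-costs}, a single split can decrease $\Psi_1$ by at most $1$ and $\Psi_2$ by at most $1$ (the nonzero negated entries are all $-1$). Second, and this is the crucial combinatorial constraint, a cyclic shift of $c$ vertices touches exactly $c$ vertices, so the total number of transpositions is $c-1$, but not every one of these can be a maximally-decreasing split. I expect the cleanest argument is to bound the number of \emph{splits} that can achieve $\Psi_1 = -1$: since $\Psi_1 = -1$ requires splitting off a piece whose size modulo $4$ lies in $\{1,3\}$ paired appropriately, roughly half the transpositions can be cost-decreasing for $\Psi_1$, giving the bound $\lceil (c-1)/2 \rceil$. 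For $\Psi_2$, the analogous counting over which residue splits yield $-1$ gives the denominator $4$ and the factor $3$, yielding $\lceil (3c-3)/4 \rceil$.

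The main obstacle will be making the counting rigorous rather than heuristic: the bound on how many splits in a $c$-cycle decomposition can simultaneously be "bad" for a given $\Psi_j$ must be established carefully, because the residues modulo $4$ of the intermediate cycle sizes change as transpositions are applied, and one must rule out a pathological sequence in which every transposition achieves the worst-case decrease. I anticipate handling this by an amortized or potential-style argument: I would track the residue of the cycle size modulo $4$ as a state and observe that producing a $\Psi_1 = -1$ (respectively $\Psi_2 = -1$) split forces the residue into a configuration from which the next split cannot again achieve the worst case, so bad splits must alternate with neutral ones. This alternation is exactly what produces the factor $1/2$ in the $\Psi_1$ bound and the factor $3/4$ in the $\Psi_2$ bound. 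A cleaner alternative, which I would try first, is to avoid the step-by-step decomposition entirely and instead argue directly: a single cyclic shift of $c$ elements reduces the total cycle size by at most $c-1$, and I would derive the two inequalities by extremizing $\Psi_1$ and $\Psi_2$ over all signature changes achievable by removing at most $c-1$ from a single cycle's size, reading off the worst case from the structure of \Cref{tab:merge-effects}. If that direct extremal computation closes the gap, it sidesteps the alternation argument and is the approach I would prefer.
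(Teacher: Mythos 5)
Your overall route is the same as the paper's: decompose $\pi$ into $c-1$ transpositions, observe that merges never decrease $\Psi_1$ or $\Psi_2$ (Table~\ref{tab:merge-costs}) while each split decreases them by at most $1$, and then bound how many of the $c-1$ transpositions can be decreasing splits by tracking residues modulo $4$. The gap is in that last, decisive step. You assert that a decreasing split forces the residue into a state from which the next split cannot decrease again, ``so bad splits must alternate with neutral ones,'' and that this alternation yields both the $1/2$ and the $3/4$ factors. For $\Psi_1$ alternation is indeed correct: every $\Psi_1$-decreasing split (in residues, $2 \to 1+1$, $0 \to 1+3$, $2 \to 3+3$) produces only pieces of residue $1$ or $3$, and no split of such a piece decreases $\Psi_1$. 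For $\Psi_2$ alternation is \emph{false}, and it would prove a false bound: resolving a $4$-cycle by three transpositions --- split residue $0$ into $1+3$, then the $3$-piece into $1+2$, then the $2$-piece into $1+1$ --- is a sequence of three \emph{consecutive} $\Psi_2$-decreasing splits with total signature change $(-1,0,0)$, i.e.\ $\Psi_2 = -3$, whereas alternation would give $\Psi_2 \ge -\lceil (c-1)/2 \rceil = -2$. (Note also the internal inconsistency: alternation, if it held, yields the factor $1/2$, not $3/4$.) The correct statement, which is exactly what the paper's transition graphs in Fig.~\ref{fig:transition} formalize, is asymmetric: for $\Psi_1$ the transition graph has no directed path of length $2$, hence alternation; for $\Psi_2$ only residue-$1$ pieces admit no decreasing split, so runs of consecutive decreasing splits can have length up to $3$ (the path $0 \to 3 \to 2 \to 1$) but not $4$, forcing at least $\lfloor (c-1)/4 \rfloor$ non-decreasing transpositions and hence the bound $\lceil (3c-3)/4 \rceil$. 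Your proposal is missing precisely this run-length analysis; you would also need to fix the decomposition so that consecutive transpositions share an element (as the paper does), since otherwise a decreasing split need not act on a piece created by its predecessor and the state-tracking does not chain.

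Your preferred fallback --- extremizing $\Psi_1,\Psi_2$ over ``signature changes achievable by removing at most $c-1$ from a single cycle's size'' --- also fails as stated, because that is not what a cyclic shift of $c$ elements can do: it can merge several cycles and can split one cycle into as many as $c$ pieces of essentially arbitrary sizes (the inverse of a $c$-cycle resolves it into $c$ fixed points), and these are exactly the extremal configurations. A direct argument \emph{can} be made to work, but it needs two ingredients your sketch lacks: first, every component of $\pi G$ that is not already a component of $G$ contains an element moved by $\pi$, so at most $c$ new components arise; second, a per-component potential, e.g.\ writing the contribution of a size-$s$ component to $\Psi_2$ as $3s/4$ plus a residue-dependent deficit lying in $[-3/4,0]$, so that conservation of total size gives $\Psi_2 \ge -3c/4$ and integrality gives $\Psi_2 \ge -\lfloor 3c/4\rfloor = -\lceil (3c-3)/4 \rceil$ (analogously for $\Psi_1$ with deficit at most $1/2$). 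Without either the run-length argument or these two ingredients, the proposal does not establish the lemma.
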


\begin{proof}
  We can write $\pi = \tau_{c-1} \circ \cdots \circ \tau_{1}$ as a
  product of $c-1$ transpositions such that any two consecutive
  transpositions $\tau_i$ and $\tau_{i+1}$ affect a common element for
  $i=1,\dots,c-1$.

  Each transposition decreases $\Psi_1$ (or $\Psi_2$) by at most $1$,
  but a decrease happens only for certain split operations.  However,
  it is not possible to reduce $\Psi_1$ (or $\Psi_2$) with every
  single transposition since for two consecutive splits the second has
  to split one of the connected components resulting from the previous
  splits.  To get an overview of the sequences of splits that reduce
  the value of $\Psi_1$ (or of $\Psi_2$) by~$1$ for each split, we
  consider the following transition graphs $T_k$ for $\Psi_k$
  ($k=1,2$) on the vertex set $S =\{0,1,2,3\}$.  In the graph $T_k$
  there is an edge from $i$ to $j$ if there is a split that splits a
  component of size $i \mod 4$ such that one of the resulting components has
  size $j \mod 4$ and this split decreases $\Psi_k$ by $1$.  The transition
  graphs $T_1$ and $T_2$ are shown in Fig.~\ref{fig:transition}.

\begin{figure}[tb]
	\centering
	\begin{subfigure}[b]{0.45\textwidth}
		\centering
		\begin{tikzpicture}
			\node[circle] (0) {$0$};
			\node[circle, right=of 0] (1) {$1$};
			\node[circle, below=of 1] (2) {$2$};
			\node[circle, left=of 2] (3) {$3$};

			\draw[-latex] (2) to (1);
			\draw[-latex] (0) to (1);
			\draw[-latex] (0) to (3);
			\draw[-latex] (2) to (3);
		\end{tikzpicture}
	\end{subfigure}%
	\begin{subfigure}[b]{0.45\textwidth}
		\centering
		\begin{tikzpicture}
			\node[circle] (0) {$0$};
			\node[circle, right=of 0] (1) {$1$};
			\node[circle, below=of 1] (2) {$2$};
			\node[circle, left=of 2] (3) {$3$};

			\draw[-latex] (2) to (1);
			\draw[-latex] (3) to (2);
			\draw[-latex] (3) to (1);
			\draw[-latex] (0) to (1);
			\draw[-latex] (0) to (3);
			\draw[-latex] (0) to (2);
		\end{tikzpicture}
	\end{subfigure}
	\caption{Transition graphs for $\Psi_1$ (left) and $\Psi_2$ (right).}
        \label{fig:transition}
\end{figure}
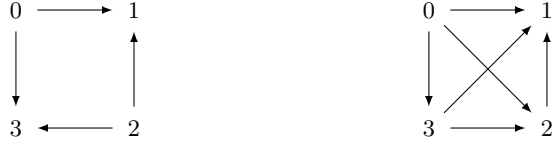

For $\Psi_1$ the longest path in the transition graph has length~$1$.
Thus, the value of $\Psi_1$ can be reduced at most every second
transposition and $\Psi_1(\Delta_X,\Delta_2,\Delta_3)
\ge - \lceil (c-1)/2 \rceil$.

For $\Psi_2$ the longest path has length~$3$ (vertex $1$ has out-degree $0$).
Therefore, after at most three consecutive steps that
decrease $\Psi_2$, there is one that does not.  It follows that at
least $\lfloor (c-1)/4 \rfloor$ operations do not decrease $\Psi_2$,
and consequently at most $\lceil (3c-3)/4 \rceil$ operations decrease
$\Psi_2$ by $1$.  Thus, $\Psi_2(\Delta_X, \Delta_2, \Delta_3) \ge -
\lceil (3c-3)/4 \rceil$.
\end{proof}

Since \texttt{permi5} performs a single cyclic shift and
\texttt{permi23} is the concatenation of two cyclic shifts,
Lemmas~\ref{lem:cycle-decrease-psi} and~\ref{lem:indicator-function}
can be used to show that no such
operation may decrease the number of operations \greedy has to perform
by more than~$1$.

\begin{corollary}
  \label{cor:one-op-greedy}
  Let $G$ be a \prtg and let $\pi$ be an operation, i.e., either a
  \emph{\texttt{permi23}} or a \emph{\texttt{permi5}}.  Then $\greedy(G) \le
  \greedy(\pi G) + 1$.
\end{corollary}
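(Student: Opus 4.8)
The plan is to argue by contradiction, combining the necessary condition of Lemma~\ref{lem:indicator-function} with the lower bounds of Lemma~\ref{lem:cycle-decrease-psi}. Suppose some operation $\pi$ violated the claim, i.e., $\greedy(G) - \greedy(\pi G) \ge 2$. I would apply Lemma~\ref{lem:indicator-function} with $G' = \pi G$ and $c = 2$, where the relevant signature difference $(\Delta_X, \Delta_2, \Delta_3)$ is exactly the signature change effected by $\pi$ that is analysed in Lemma~\ref{lem:cycle-decrease-psi}. Depending on whether $a_2 \ge a_3$ or $a_3 > a_2$ holds for $G$, Lemma~\ref{lem:indicator-function} then forces either $\Psi_1(\Delta_X,\Delta_2,\Delta_3) \le -2 \cdot 2 + 1 = -3$ or $\Psi_2(\Delta_X,\Delta_2,\Delta_3) \le -3 \cdot 2 + 2 = -4$. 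The goal is to contradict both possibilities.

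Next I would bound $\Psi_1$ and $\Psi_2$ of the signature change caused by $\pi$ from below, distinguishing the two instruction types. A \texttt{permi5} is a single cyclic shift of $c \le 5$ vertices, so Lemma~\ref{lem:cycle-decrease-psi} gives directly $\Psi_1 \ge -\lceil (5-1)/2 \rceil = -2$ and $\Psi_2 \ge -\lceil (3\cdot 5 - 3)/4 \rceil = -3$; smaller shifts only make these bounds weaker. For a \texttt{permi23} I would use that it is the composition of two cyclic shifts on disjoint vertex sets: a transposition ($c = 2$) and a shift of at most three vertices ($c \le 3$). Applying them one after the other keeps the graph a \prtg at every step, so Lemma~\ref{lem:cycle-decrease-psi} applies to each shift individually, giving $\Psi_1 \ge -1$, $\Psi_2 \ge -1$ for the transposition and $\Psi_1 \ge -1$, $\Psi_2 \ge -2$ for the $3$-shift. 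Since the total signature change is the sum of the two individual changes and $\Psi_1, \Psi_2$ are linear, the bounds add to $\Psi_1 \ge -2$ and $\Psi_2 \ge -3$. Hence in both cases we obtain $\Psi_1 \ge -2$ and $\Psi_2 \ge -3$.

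Combining the two estimates yields the contradiction: in the case $a_2 \ge a_3$ we would need $\Psi_1 \le -3$, contradicting $\Psi_1 \ge -2$; in the case $a_3 > a_2$ we would need $\Psi_2 \le -4$, contradicting $\Psi_2 \ge -3$. Either way the assumption $\greedy(G) - \greedy(\pi G) \ge 2$ is impossible, which gives $\greedy(G) \le \greedy(\pi G) + 1$.

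I expect the \texttt{permi23} case to be the main obstacle: one must justify splitting the operation into its two constituent cyclic shifts, check that the intermediate graph is still a \prtg so that Lemma~\ref{lem:cycle-decrease-psi} is applicable to the second shift, and invoke the additivity of the signature changes together with the linearity of $\Psi_1$ and $\Psi_2$ so that the per-shift bounds may be summed. The \texttt{permi5} case and the final case distinction are routine once the ceiling evaluations are carried out.
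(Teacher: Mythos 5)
Your proposal is correct and follows essentially the same argument as the paper's proof: assume a drop of at least $2$, invoke Lemma~\ref{lem:indicator-function} with $c=2$ to force $\Psi_1 \le -3$ or $\Psi_2 \le -4$, and contradict this via Lemma~\ref{lem:cycle-decrease-psi}, treating \texttt{permi5} as one cyclic shift and \texttt{permi23} as two disjoint shifts whose signature changes (and hence $\Psi$-values, by linearity) add. In fact you spell out the points the paper leaves implicit, namely that the intermediate graph after the first shift of a \texttt{permi23} is again a \prtg and that the per-shift bounds may be summed.
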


\begin{proof}
  Assume for a contradiction that $\greedy(G) > \greedy(\pi G) - 1$.
  By Lemma~\ref{lem:indicator-function} we have that either
  $\Psi_1(\Delta_X, \Delta_2,\Delta_3) \le -3$ or $\Psi_2(\Delta_X,
  \Delta_2, \Delta_3) \le -4$.

  We distinguish cases based on whether $\pi$ is a \texttt{permi5} or
  a \texttt{permi23}.  If $\pi$ is a \texttt{permi5}, then it is a
  $c$-cycle with $c \le 5$.  By Lemma~\ref{lem:cycle-decrease-psi}, we
  have that $\Psi_1(\Delta_X, \Delta_2, \Delta_3) \ge -2$ and
  $\Psi_2(\Delta_X, \Delta_2, \Delta_3) \ge -3$.  This contradicts the
  above bounds from Lemma~\ref{lem:indicator-function}.

  If $\pi$ is a \texttt{permi23}, then it is a composition of a
  2-cycle and a $c$-cycle with $c \le 3$.  According to
  Lemma~\ref{lem:cycle-decrease-psi}, both cycles contribute at least
  $-1$ to $\Psi_1$, and at least $-1$ and $-2$ to $\Psi_2$.
  Therefore, we have $\Psi_1(\Delta_X, \Delta_2, \Delta_3) \ge -2$ and
  $\Psi_2(\Delta_X, \Delta_2, \Delta_3) \ge -3$.  This is again a
  contradiction.
\end{proof}

Using this corollary and an induction on the length of an optimal
shuffle code, we show that \greedy is optimal for \prtgs;
if no operation reduces the number of operations \greedy needs
by more than~$1$, why not use the operation suggested by \greedy?

\begin{theorem}
  \label{thm:greedy-optimal-prtg}
  Let $G$ be a \prtg.  An optimal shuffle code for $G$ takes
  $\greedy(G)$ operations.  Algorithm \greedy computes
  an optimal shuffle code in linear time.
\end{theorem}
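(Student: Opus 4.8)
The plan is to establish the two inequalities $\greedy(G) \ge \mathrm{OPT}(G)$ and $\greedy(G) \le \mathrm{OPT}(G)$, where $\mathrm{OPT}(G)$ denotes the length of an optimal shuffle code for $G$. The first is immediate: \greedy terminates with a trivial \prtg and thus produces a valid shuffle code, whose length is $\greedy(G)$ by \Cref{lem:greedy-cost}; hence $\mathrm{OPT}(G) \le \greedy(G)$. The real content is the matching lower bound $\greedy(G) \le \mathrm{OPT}(G)$, which asserts that no sequence of \texttt{permi5} and \texttt{permi23} operations can make $G$ trivial in fewer than $\greedy(G)$ steps.

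For the lower bound I would induct on $p = \mathrm{OPT}(G)$. If $p = 0$, then $G$ is already trivial and $\greedy(G) = 0$. For the inductive step, fix an optimal shuffle code $\pi_1, \dots, \pi_p$ for $G$ and consider the \prtg $\pi_1 G$. The suffix $\pi_2, \dots, \pi_p$ is a shuffle code for $\pi_1 G$, so $\mathrm{OPT}(\pi_1 G) \le p - 1 < p$, and the inductive hypothesis yields $\greedy(\pi_1 G) \le p - 1$. Applying \Cref{cor:one-op-greedy} to the single operation $\pi_1$ gives $\greedy(G) \le \greedy(\pi_1 G) + 1 \le p$, which closes the induction. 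Combined with the upper bound, this proves $\greedy(G) = \mathrm{OPT}(G)$, i.e.\ that \greedy is optimal.

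It then remains to argue the running time. Computing $\sig(G)$ requires only traversing each cycle once to read off its length modulo $4$, which takes $O(|V| + |E|) = O(|V|)$ time for a \prtg. Emitting the operations is equally cheap: each cycle of size at least $4$ is shortened by repeated \texttt{permi5} operations, and the residual $2$- and $3$-cycles are resolved by \texttt{permi23} operations exactly as in the description of \greedy. Every such operation touches at most five vertices and can be output in constant time, and the total number of operations is $\greedy(G) = O(|V|)$, so \greedy runs in linear time.

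The heavy lifting has already been done in \Cref{cor:one-op-greedy}, so the only real subtlety here is getting the induction to hook up correctly---in particular, observing that the tail of an optimal code is itself a (not necessarily optimal, but no longer) code for $\pi_1 G$, so that the inductive hypothesis applies to a strictly smaller instance. I do not expect the linear-time claim to pose difficulties beyond verifying that producing the concrete instruction sequence never exceeds a constant amount of work per emitted instruction.
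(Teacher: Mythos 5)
Your proof is correct and takes essentially the same route as the paper: a trivial upper bound because \greedy emits a valid code, and for the matching lower bound an induction on the length of an optimal shuffle code whose inductive step is exactly \Cref{cor:one-op-greedy}. If anything, your write-up is slightly cleaner than the paper's, which peels off $\pi_{k+1}$ rather than $\pi_1$ (an index slip under the convention that $\pi_1$ acts first) and tacitly assumes the residual instance has optimum exactly $k$, a point your strong induction on $\mathrm{OPT}$ handles explicitly.
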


\begin{proof}
  The proof is by induction on the overall length of an optimal
  shuffle code.  Clearly, \greedy computes optimal shuffle codes for
  all instances that have a shuffle code of length~0.

  Assume that $G$ admits an optimal shuffle code of length~$k+1$.  We
  show that $\greedy(G) = k+1$.  First of all, note that $\greedy(G)
  \ge k+1$ as it computes a shuffle code of length $\greedy(G)$.  Let
  $\pi_1,\dots,\pi_{k+1}$ be a shuffle code for $G$.  Then obviously
  $\pi_{k+1}G$ admits an optimal shuffle code of length $k$, and
  therefore $\greedy(\pi_{k+1}G) = k$ by our inductive assumption.
  Corollary~\ref{cor:one-op-greedy} implies $\greedy(G) \le
  \greedy(\pi_{k+1} G) + 1 = k+1$; the induction hypothesis is proved.

  Clearly, algorithm \greedy indeed computes a correct, and thus
  optimal, shuffle code.  It can easily be implemented to run in
  linear time.
\end{proof}

Moreover, since merge operations may not decrease the cost of \greedy
and any \prtg that can be formed from the original \odrtg $G$ by
inserting edges can be obtained from the \prtg $G'$ formed by \greedy
and a sequence of merge operations, it follows that the length of an
optimal shuffle for $G$ is $\greedy(G')$.

\begin{lemma}
  \label{lem:union-prtg-optimal}
  Let $G$ be an \odrtg and let $G'$ be the \prtg formed by completing
  each directed path into a directed cycle.  Then the length of an
  optimal shuffle code of $G$ is $\greedy(G')$.
\end{lemma}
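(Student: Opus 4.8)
The plan is to prove the claimed equality by establishing the two matching inequalities, exploiting that the edge set of $G$ is contained in that of $G'$ and that $G'$ differs from $G$ only in the edges that close the directed paths into cycles. Throughout I use that, since $G$ is an \odrtg, every vertex has out-degree at most~$1$, so by \Cref{lem:shuffle-normalized}~(v) an optimal shuffle code for $G$ uses no copy operations and is a pure sequence of permutations.

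For the upper bound, I would observe that every edge of $G$ is also an edge of $G'$. Hence any shuffle code that makes the \prtg $G'$ trivial also implements $G$: after its execution every vertex with an incoming edge in $G'$, and in particular every such vertex of $G$, holds the correct value. By \Cref{thm:greedy-optimal-prtg}, \greedy produces an optimal shuffle code for $G'$ of length $\greedy(G')$, and this code implements $G$; therefore the length of an optimal shuffle code for $G$ is at most $\greedy(G')$.

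For the lower bound, I would take an optimal (copy-free) shuffle code $\pi_1,\dots,\pi_p$ for $G$ and set $\pi = \pi_p\circ\cdots\circ\pi_1$. Implementing $G$ forces $\pi(u)=v$ for every edge $(u,v)$ of $G$; the only remaining freedom is the image under $\pi$ of the terminal vertex of each path (the vertices of out-degree~$0$), and since $\pi$ is a bijection these images are exactly the start vertices of paths (the vertices of in-degree~$0$). Thus $\pi$ determines a \prtg $H$ obtained from $G$ by adding, for each path, one edge from its terminal to a start vertex, so that $\pi$ equals the successor map of $H$. Consequently the same sequence $\pi_1,\dots,\pi_p$ makes $H$ trivial, and \Cref{thm:greedy-optimal-prtg} yields $p \ge \greedy(H)$.

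It then remains to show that $\greedy(H) \ge \greedy(G')$, which I would obtain from \Cref{lem:merge-effect} by arguing that $H$ is reachable from $G'$ through a sequence of merges. Both $G'$ and $H$ keep the original cycles of $G$ untouched and differ only in how the path segments are concatenated into cycles; in $G'$ each path forms its own cycle. A single transposition of the two terminal vertices of two distinct path-cycles is a merge whose effect on the edge set concatenates the two paths into one longer cycle. Iterating this, one merge per path beyond the first, realizes the cyclic concatenation of paths prescribed by each cycle of $H$, so $H$ is obtained from $G'$ by merges. Since by \Cref{lem:merge-effect} no merge decreases the \greedy cost, $\greedy(H)\ge\greedy(G')$, and combining with $p\ge\greedy(H)$ gives the lower bound. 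The main obstacle is this last step: carefully checking that the prescribed closure $H$ can be reached from $G'$ using only merges, never splits, and that a single terminal-vertex transposition concatenates the two paths in the intended order under the group action; the remainder is a direct combination of \Cref{thm:greedy-optimal-prtg} and \Cref{lem:merge-effect}.
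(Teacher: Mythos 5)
Your proposal is correct and follows essentially the same route as the paper: your \prtg $H$ induced by the composite permutation $\pi$ is exactly the paper's $G''$, and the key step---reaching $H$ from $G'$ by merges and invoking \Cref{lem:merge-effect} together with \Cref{thm:greedy-optimal-prtg}---is identical. You merely make explicit two points the paper treats implicitly (the upper bound via the edge containment $E(G) \subseteq E(G')$, and the merge-by-merge construction of $H$ from $G'$, which the paper dismisses as ``not hard to see''), which is fine but not a different argument.
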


\begin{proof}
  Assume $\pi_1,\dots,\pi_k$ is an optimal shuffle code for $G$.  Of
  course, applying $\pi = \pi_k \circ \cdots \circ \pi_1$ to $G$ maps
  every value of $G$ somewhere, that is, $\pi_1,\dots,\pi_k$ is
  actually an optimal shuffle code for some instance $G''$ that
  consists of a disjoint union of directed cycles and contains $G$ as
  a subgraph.  It is not hard to see that $G''$ can be obtained from
  $G'$ by a sequence of merge operations $\tau_1,\dots,\tau_t$, i.e.,
  $G'' = \tau_t \circ \cdots \circ \tau_1 G'$.
  Lemma~\ref{lem:merge-effect} implies that $\greedy(G') \le
  \greedy(\tau_1 G') \le \cdots \le \greedy(\tau_t \circ \cdots \circ \tau_1 G')
  = \greedy(G'') = k$, where the last equality follows from
  Theorem~\ref{thm:greedy-optimal-prtg}, the optimality of \greedy for
  \prtgs.
\end{proof}

By combining Theorem~\ref{thm:greedy-optimal-prtg} and
Lemma~\ref{lem:union-prtg-optimal}, we obtain the main result of this
section.

\begin{theorem}
  \label{thm:greedy-correct}
  Let $G$ be an \odrtg.  Then an optimal shuffle code for $G$ requires
  $\greedy(G)$ operations.  \greedy computes such a shuffle code in
  linear time.
\end{theorem}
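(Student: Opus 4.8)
The plan is to combine the two facts established immediately before this statement: Theorem~\ref{thm:greedy-optimal-prtg}, which settles the \prtg case, and Lemma~\ref{lem:union-prtg-optimal}, which reduces the \odrtg case to a \prtg. Let $G'$ be the \prtg obtained from $G$ by completing each directed path into a directed cycle. Lemma~\ref{lem:union-prtg-optimal} already tells us that the length of an optimal shuffle code for $G$ is exactly $\greedy(G')$, so the only thing left to verify for the first claim is the identity $\greedy(G) = \greedy(G')$.

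For this equality I would argue via the signature. Completing a directed path into a directed cycle adds a single edge but leaves the number of vertices in the corresponding component unchanged; hence every element of the set $Q$ of paths and cycles keeps its value of $\size(\cdot)$, and therefore $\sig(G) = \sig(G')$. Since Lemma~\ref{lem:greedy-cost} expresses $\greedy$ purely as a function of the signature, it follows at once that $\greedy(G) = \greedy(G')$. Together with Lemma~\ref{lem:union-prtg-optimal} this yields that an optimal shuffle code for $G$ takes $\greedy(G)$ operations, which is the first assertion. (Equivalently, one may simply note that the first step of \greedy on $G$ literally produces $G'$, after which the run is identical to a run on $G'$.)

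For the algorithmic claim I would observe that the path-completion step runs in linear time, as it amounts to adding, for each directed path, one edge from its last vertex back to its first vertex, which requires only locating the two endpoints of each path. By Theorem~\ref{thm:greedy-optimal-prtg} the remaining work of \greedy on the resulting \prtg also runs in linear time. Hence \greedy computes an optimal shuffle code for $G$ in linear time.

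I do not expect a genuine obstacle here, since all of the substantive work has been carried out in the preceding lemmas. The one point that requires a moment of care is the observation that path completion preserves component sizes and therefore the signature; this is precisely what allows the signature-based cost formula of Lemma~\ref{lem:greedy-cost} to transfer the value $\greedy(G')$ back to $\greedy(G)$.
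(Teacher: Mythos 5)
Your proposal is correct and takes essentially the same route as the paper: the paper obtains this theorem precisely by combining Theorem~\ref{thm:greedy-optimal-prtg} with Lemma~\ref{lem:union-prtg-optimal}, leaving the identity $\greedy(G) = \greedy(G')$ implicit (it is already used inside the proof of Lemma~\ref{lem:greedy-cost}, whose first line notes that path completion preserves the signature). Your explicit signature-preservation argument, and the equivalent remark that step~1 of \greedy emits no operations and literally produces $G'$, simply spell out that implicit step.
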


\section{The General Case}
\label{sec:general-case}

In this section we study the general case.  A \emph{copy set} of an
\rtg $G=(V,E)$ is a set $C \subseteq E$ such that $G-C = (V,E-C)$ is
an \odrtg and $|C| = \sum_{v \in V} \max\{\deg^+(v) -1, 0\}$.  We
denote by $\mathcal C(G)$ the set of all copy sets of $G$.  According
to Lemma~\ref{lem:cut-copies} an optimal shuffle code for $G$ can be
found by finding a copy set $C \in \mathcal C(G)$ such that the \odrtg
$G-C$ admits a shortest shuffle code.  By
Theorem~\ref{thm:greedy-correct} an optimal shuffle code for $G-C$ can
be computed with the greedy algorithm and its length can be
computed according to Lemma~\ref{lem:greedy-cost}.

We thus seek a
copy set $C \in \mathcal C(G)$ that minimizes the cost function
$\greedy(G - C) = X + \max\{\lceil (a_2+a_3)/2 \rceil, \lceil (a_2 +
2a_3)/3 \rceil\}$, where $(X, a_2, a_3)$ is the signature of $G - C$.
Such a copy set is called \emph{optimal}.  Clearly, this is
equivalent to minimizing the function
\[
  \greedy'(G - C) = X + \max \{ \frac{a_2 + a_3}{2}, \frac{a_2+2a_3}{3} \}
                  = \begin{cases}
                      X + \frac{a_2}{2} + \frac{a_3}{2} & \mathrm{if~}a_2 \ge a_3 \\
                      X + \frac{a_2}{3} + \frac{2a_3}{3} & \mathrm{if~}a_2 < a_3
                    \end{cases}
\]
To keep track of which case is used for evaluating $\greedy'$, we
define $\diff(G - C) = a_2 - a_3$ and compute for each of the two
function parts and every possible value $d$ a copy set $C_d$ with
$\diff(G - C_d) = d$ that minimizes that function.

More formally, we
define $\cost^1(G - C) = X + \frac{1}{2}a_2 + \frac{1}{2}a_3$ and
$\cost^2(G - C) = X + \frac{1}{3}a_2 + \frac{2}{3}a_3$ and we seek two
tables $T_G^1[\cdot], T_G^2[\cdot]$, such that $T^i_G[d]$ is the
smallest cost $\cost^i(G-C)$ that can be achieved with a copy set $C
\in \mathcal C(G)$ with $\diff(G-C) = d$.
We observe that $T^i_G[d] = \infty$ for $d < -n$ and for $d > n$.
The following lemma shows that the length of an optimal shuffle code can be
computed from these two tables.

\begin{lemma}
	\label{lem:graph-optimal}
	Let $G = (V, E)$ be an \rtg.  The length of an optimal shuffle code for~$G$ is
	$
        \sum_{v \in V} \max\{\deg^+(v) - 1,0\} + \min \{ \min_{d \ge 0}
        \lceil T_G^1[d] \rceil, \min_{d < 0} \lceil T_G^2[d] \rceil
        \}.
	$
\end{lemma}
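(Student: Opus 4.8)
The plan is to show that the claimed formula correctly computes the length of an optimal shuffle code by connecting the two tables $T_G^1[\cdot]$ and $T_G^2[\cdot]$ to the characterization of optimal shuffle codes established earlier. First I would invoke Lemma~\ref{lem:cut-copies}, which tells us that an optimal shuffle code decomposes into exactly $\sum_{v \in V} \max\{\deg^+(v) - 1, 0\}$ copy operations (corresponding to a copy set $C \in \mathcal{C}(G)$) together with an optimal shuffle code for the resulting \odrtg $G - C$. By Theorem~\ref{thm:greedy-correct}, the number of permutation operations needed for $G - C$ is exactly $\greedy(G - C)$. Hence the total length of an optimal shuffle code equals $\sum_{v \in V} \max\{\deg^+(v) - 1, 0\} + \min_{C \in \mathcal{C}(G)} \greedy(G - C)$, and the entire task reduces to showing that the minimum over all copy sets, $\min_{C} \greedy(G - C)$, equals $\min\{\min_{d \ge 0} \lceil T_G^1[d] \rceil, \min_{d < 0} \lceil T_G^2[d]\rceil\}$.

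**Relating $\greedy$ to the two cost functions.**

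Next I would unpack the definition of $\greedy(G - C)$ from Lemma~\ref{lem:greedy-cost}. Writing $(X, a_2, a_3)$ for the signature of $G - C$, recall that $\greedy(G - C) = X + \max\{\lceil (a_2+a_3)/2 \rceil, \lceil (a_2+2a_3)/3\rceil\}$, and that the case split is governed by the sign of $a_2 - a_3 = \diff(G - C)$: the first term dominates exactly when $a_2 \ge a_3$, i.e.\ when $\diff(G - C) \ge 0$. The key observation, already noted in the text preceding the lemma, is that $\greedy(G - C) = \lceil \greedy'(G - C) \rceil$, and that $\greedy'(G - C)$ equals $\cost^1(G - C)$ when $a_2 \ge a_3$ and $\cost^2(G - C)$ when $a_2 < a_3$. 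Because the ceiling is monotone and each table entry records the minimum of the respective real-valued cost over all copy sets achieving a fixed difference $d$, minimizing $\greedy(G - C)$ over all $C$ splits cleanly according to the value of $d = \diff(G - C)$: for copy sets with $d \ge 0$ the relevant cost is $\cost^1$, whose minimum at difference $d$ is $T_G^1[d]$, and for copy sets with $d < 0$ the relevant cost is $\cost^2$, whose minimum is $T_G^2[d]$.

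**Assembling the minimum and handling the ceiling.**

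The remaining step is to take the minimum over all $d$ and carefully commute it with the ceiling operation. I would argue that
\[
  \min_{C \in \mathcal{C}(G)} \greedy(G - C)
  = \min\Bigl\{ \min_{d \ge 0} \lceil T_G^1[d] \rceil,\ \min_{d < 0} \lceil T_G^2[d] \rceil \Bigr\},
\]
by observing that every copy set $C$ contributes to exactly one of the two families according to the sign of $\diff(G - C)$, and that within each family the copy set achieving the minimal real-valued cost $\cost^i$ at a given $d$ also achieves the minimal integer cost $\lceil \cost^i \rceil$ at that $d$, since $\lceil \cdot \rceil$ is monotone. Substituting this identity into the total-length expression from the first step yields exactly the claimed formula, completing the proof.

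**The main obstacle.**

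The step I expect to require the most care is the interchange of the minimum with the ceiling, i.e.\ justifying that $\min_C \lceil \greedy'(G - C) \rceil = \lceil \min_C \greedy'(G - C) \rceil$ restricted appropriately to each case. This is valid because $\lceil \cdot \rceil$ is monotone nondecreasing, so it commutes with minimization, but one must be precise that each table entry $T_G^i[d]$ is defined as the minimum of the \emph{real-valued} cost $\cost^i$ rather than its ceiling; the correctness hinges on the fact that applying $\lceil \cdot \rceil$ after taking the minimum over $C$ at fixed $d$ gives the same result as minimizing the ceilings. A subtle but routine point is confirming that the case distinction in $\greedy'$ (whether $a_2 \ge a_3$) aligns exactly with the partition of the $d$-range into $d \ge 0$ and $d < 0$, so that no copy set is counted under the wrong cost function.
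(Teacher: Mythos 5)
Your proof is correct and takes essentially the same route as the paper's: both reduce the problem via Lemma~\ref{lem:cut-copies} and Theorem~\ref{thm:greedy-correct} to minimizing $\greedy(G-C)$ over copy sets, observe that $\greedy(G-C)$ equals $\lceil \cost^1(G-C) \rceil$ when $\diff(G-C) \ge 0$ and $\lceil \cost^2(G-C) \rceil$ when $\diff(G-C) < 0$ (pulling the integer $X$ inside the ceiling), and use monotonicity of $\lceil \cdot \rceil$ to commute it with the minimum defining the table entries. The paper merely packages this as two inequalities (a lower bound from an optimal normalized code and an upper bound from the copy set realizing the minimizing table entry) instead of your single equality chain.
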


\begin{proof}
  Let $m = \sum_{v \in V} \max\{\deg^+(v) - 1,0\}$.
  Consider an optimal normalized shuffle code for $G$, which, according
  to Lemma~\ref{lem:cut-copies}, consists of a copy set $C \subseteq
  E$ and a sequence of $k$ permutation operations, i.e., the length of
  the shuffle code is $m + k$.  Let $(X,a_2,a_3)$ denote the
  signature of $G-C$ and let $d = a_2 - a_3$.  If $a_2 \ge a_3$, or
  equivalently $d \ge 0$, then according to
  Theorem~\ref{thm:greedy-correct}, we have $k = \greedy(G) = X +
  \lceil (a_2 + a_3)/2 \rceil = \lceil X + (a_2+a_3)/2 \rceil = \lceil
  \cost^1(G-C) \rceil$, and therefore the length of the shuffle code
  is at most $m + T_G^1[d]$.  If $a_2 < a_3$, i.e., if $d < 0$,
  then we have $k = \greedy(G) = X + \lceil (a_2 + 2a_3)/3 \rceil =
  \lceil X + (a_2 + 2a_3)/3 \rceil = \lceil \cost^2(G-C) \rceil$, and
  therefore the length of the shuffle code is at most $m +
  T_G^2[d]$.  In either case the length of the shuffle code is bounded
  by the expression given in the statement of the theorem.

  Conversely, assume that the minimum of the expression is obtained
  for some value $T_G^i[d]$.  If $d < 0$ (resp. if $d \ge 0$), there
  exists a copy set $C$ such that $\sig(G-C) = (X,a_2,a_3)$ and
  $\greedy(G-C) = \lceil \cost^2(G-C) \rceil$ (resp. $\greedy(G-C) =
  \lceil \cost^2(G-C) \rceil$) is at most $T_G^1[d]$ (resp. at most
  $T_G^2[d]$).  Then, clearly, the shuffle code defined by $C$ and
  \greedy applied to $G-C$ has length at most $m + \lceil T_G^2[d]
  \rceil$ (resp. $m + \lceil T_G^1[d] \rceil$).
\end{proof}
In the following, we show how to compute for an \rtg $G$ a table $T_G[\cdot]$
with
\[
	T_G[d] = \min_{\substack{C \in \mathcal C(G)\\\diff(G - C) = d}} \cost(G - C)
\]
for an arbitrary cost function $\cost(G - C) = c(\sig(G - C))$, where
$c$ is a linear function.  This is done in several steps depending on
whether $G$ is disconnected, is a tree, or is connected and contains a
cycle.  Before we continue, we introduce several preliminaries to
simplify the following calculations.  We denote by $P_s$ a directed
path on $s$ vertices.

\begin{definition}
	A map $f$ that assigns a value to an \odrtg is \emph{signature-linear}
	if there exists a linear function $g \colon \mathbb R^3 \to \mathbb R$
	such that $f(G) = g(\sig(G))$ for every \odrtg $G$.  For a
	signature-linear function $f$,
	$\Delta_f(s) = f(P_{s + 1}) - f(P_s)$ is the \emph{correction
	term}.
\end{definition}
Note that both $\cost = c \circ \sig$ and $\diff = d \circ \sig$ with
$d(X,a_2,a_3) = a_2-a_3$ are signature-linear.  The correction term
$\Delta_f(s)$ describes the change of $f$ when the size of one
connected component is increased from $s$ to $s + 1$.

\begin{lemma}
	\label{lem:signature-linear}
	Let $f$ be a signature-linear function.  Then the following
        hold:
	\begin{compactenum}[(i)]
		\item \label{eqn:sig-union}
			$f(G_1 \cup G_2) = f(G_1) + f(G_2)$ for disjoint \odrtgs $G_1, G_2$,
		\item \label{eqn:sig-new-vertex}
			Let $G = (V, E)$ be an \odrtg and let $v \in V$ with
			in-degree $0$.  Denote by $s$ the size of the connected component
			containing $v$  and let
			$G^+ = (V \cup \{ u \}, E \cup \{ (u, v) \})$ where $u$ is a new
			vertex.  Then $f(G^+) = f(G) + \Delta_f(s)$.
	\end{compactenum}
\end{lemma}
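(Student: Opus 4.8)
The plan is to reduce both statements to a single structural observation: the signature is additive over connected components. Concretely, since $\sig(G) = (X, a_2, a_3)$ is defined so that $X$ is a sum over the paths and cycles of $G$ and $a_2, a_3$ are counts over those same components, the components of a disjoint union $G_1 \cup G_2$ are exactly the components of $G_1$ together with those of $G_2$, and hence $\sig(G_1 \cup G_2) = \sig(G_1) + \sig(G_2)$. Part~(\ref{eqn:sig-union}) then falls out of the definition of signature-linearity: writing $f = g \circ \sig$ with $g$ linear, I would compute
\[
  f(G_1 \cup G_2) = g(\sig(G_1) + \sig(G_2)) = g(\sig(G_1)) + g(\sig(G_2)) = f(G_1) + f(G_2),
\]
where the middle equality is just additivity of the linear map $g$.

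For part~(\ref{eqn:sig-new-vertex}) I would first pin down the shape of the component of $v$. In an \odrtg every vertex has both in- and out-degree at most~$1$, so each connected component is a directed path or a directed cycle; since $v$ has in-degree~$0$ it cannot lie on a cycle, so its component is a directed path with $v$ as its source, and as it has $s$ vertices it is isomorphic to $P_s$. Adding the new vertex $u$ together with the edge $(u,v)$ gives $u$ in-degree~$0$ and out-degree~$1$ and raises the in-degree of $v$ to~$1$, so this component becomes $P_{s+1}$ while every other component is untouched. Writing $G = P_s \cup H$ and $G^+ = P_{s+1} \cup H$, where $H$ is the disjoint union of the remaining components, part~(\ref{eqn:sig-union}) yields $f(G) = f(P_s) + f(H)$ and $f(G^+) = f(P_{s+1}) + f(H)$. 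Subtracting gives $f(G^+) - f(G) = f(P_{s+1}) - f(P_s) = \Delta_f(s)$ by the definition of the correction term.

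There is essentially no hard computation here; the whole argument is driven by the additivity of $\sig$ over components. The only point that needs genuine care is the structural claim in part~(\ref{eqn:sig-new-vertex}) that the component of $v$ really is $P_s$ and that the augmentation produces exactly $P_{s+1}$ and nothing else. This relies on the \odrtg property (both degrees bounded by~$1$) together with the hypothesis that $v$ has in-degree~$0$, and I would state this step explicitly so that the reduction to part~(\ref{eqn:sig-union}) is clearly justified. I would also note that $\sig$, and hence $f$, depends only on the multiset of component sizes, which is what makes the notation $f(P_s)$ and $f(H)$ unambiguous.
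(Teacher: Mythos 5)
Your proposal is correct and follows essentially the same route as the paper: part~(i) is signature additivity over disjoint unions plus linearity of $g$, and part~(ii) is the observation that adding $u$ replaces a size-$s$ component by a size-$(s+1)$ component, so that $\sig(G^+) = \sig(G) - \sig(P_s) + \sig(P_{s+1})$ and the claim follows from the definition of $\Delta_f(s)$. Your extra care in verifying that the component of $v$ is a path $P_s$ (and that $f(P_s)$ is well-defined because $\sig$ depends only on component sizes) is a sound elaboration of what the paper leaves implicit, not a different argument.
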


\begin{proof}
  For (\ref{eqn:sig-union}) observe that $\sig(G_1 \cup G_2) =
  \sig(G_1) + \sig(G_2)$; then the statement follows from the
  signature-linearity of $f$.

  For (\ref{eqn:sig-new-vertex}) observe that by adding $u$, we
  replace a connected component of size $s$ by one of size $s+1$.
  Thus $\sig(G^+) = \sig(G) - \sig(P_s) + \sig(P_{s+1})$.  The
  statement follows from the signature-linearity of $f$ and the
  definition of $\Delta_f(s)$.
\end{proof}
Note that $\Delta_f(s) = \Delta_f(s+4)$ for all values of
$s$ and hence it suffices to know
the size of the enlarged component modulo $4$.

The main idea for computing table $T_G[\cdot]$ by dynamic programming
is to decompose $G$ into smaller edge-disjoint subgraphs $G = G_1 \cup
\dots \cup G_k$ such that the copy sets of $G$ can be constructed from
copy sets for each of the $G_i$.  We call such a decomposition
\emph{proper partition} if for every vertex $v$ of $G$ there exists an
index $i$ such that $G_i$ contains all outgoing edges of $v$.  Let
$G_1, \ldots, G_k$ be a proper partition of $G$ and let $\mathcal C_i
\subseteq \mathcal C(G_i)$ for $i = 1, \ldots, k$.  We define
$\mathcal C_1 \otimes \cdots \otimes \mathcal C_k = \left\{ C_1 \cup
  \cdots \cup C_k \mid C_i \in \mathcal C_i, i = 1, \ldots, k
\right\}$.  It is not hard to see that $\mathcal C(G_1 \cup \cdots
\cup G_k) = \mathcal C(G_1) \otimes \cdots \otimes \mathcal C(G_k)$.

\subsection{Disconnected \rtgs}
\label{sec:disconnected-rtgs}

We start with the case that $G$ is disconnected and consists of
connected components $G_1, \ldots, G_k$, which form a proper
partition of $G$.  The main issue is to keep track of $\diff$ and
$\cost$.  For an \rtg $G$, we define $\mathcal C(G; d) = \{ C \in
\mathcal C(G) \mid \diff(G - C) = d \}$.  By
Lemma~\ref{lem:signature-linear}(\ref{eqn:sig-union}) and the
signature-linearity of $\diff$, if $C_i \in \mathcal C(G_i;d_i)$ for
$i=1,2$, then $C_1 \cup C_2 \in \mathcal C(G_1 \cup G_2;d_1+d_2)$.
This leads to the following lemma.

\begin{lemma}
	\label{lem:copyset-calculation-rules}
	Let $G$ be an \rtg and let $G_1, G_2$ be vertex-disjoint \rtgs.  Then
	\begin{compactenum}[(i)]
		\item \label{eqn:CGd} $\mathcal C(G) = \bigcup_d \mathcal C(G; d)$ and
		\item \label{eqn:CGd-union}
			$\mathcal C(G_1 \cup G_2; d) = \bigcup_{d'} \left( \mathcal C(G_1; d')
			\otimes \mathcal C(G_2; d - d') \right)$.
	\end{compactenum}
\end{lemma}

\begin{proof}
  Equation (\ref{eqn:CGd}) follows immediately from the definition of
  $\mathcal C(G;d)$.  For Equation~(\ref{eqn:CGd-union}) observe that
  if $C_1 \in \mathcal C(G_1;d')$ and $C_2 \in \mathcal C(G_2;d-d')$,
  then $C = C_1 \cup C_2$ is a copy set of $G$ and by
  Lemma~\ref{lem:signature-linear}(\ref{eqn:sig-union}) $\diff(G-C) =
  \diff((G_1-C_1) \cup (G_2 -C_2)) = \diff(G_1 - C_1) + \diff(G_2
  -C_2) = d' + d-d' = d$, and hence $C_1 \cup C_2 \in \mathcal
  C(G;d)$.  Conversely, if $C \in \mathcal C(G;d)$, define $C_i = C
  \cap E_i$ where $E_i$ is the edge set of $G_i$ for $i=1,2$.  Let $d'
  = \diff(G_1-C_1)$.  As above, it follows from
  Lemma~\ref{lem:signature-linear}(\ref{eqn:sig-union}) that $d =
  \diff(G-C) = \diff(G_1-C_1) + \diff(G_2-C_2) = d' + \diff(G-C)$, and
  hence $\diff(G-C) = d-d'$.  Thus $C \in \mathcal C(G_1;d') \otimes
  \mathcal C(G_2;d-d')$.
\end{proof}

By further exploiting the signature-linearity of $\cost$, we also get
$\cost((G_1 \cup G_2) - (C_1 \cup C_2)) = \cost(G_1-C_1) +
\cost(G_2-C_2)$, allowing us to compute the cost of copy sets formed
by the union of copy sets of vertex-disjoint graphs.

\begin{lemma}
	\label{lem:components}
	Let $G_1, G_2$ be two vertex-disjoint \rtgs and let $G = G_1 \cup G_2.$
	Then
	$
		T_G[d] = \min_{d'} \{ T_{G_1}[d'] + T_{G_2}[d - d'] \}.
	$
\end{lemma}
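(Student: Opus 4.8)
The plan is to unfold the definition of $T_G[d]$ as a minimization of $\cost(G-C)$ over the copy sets $C \in \mathcal C(G;d)$, use Lemma~\ref{lem:copyset-calculation-rules}(\ref{eqn:CGd-union}) to decompose the index set according to how the value $d$ is split between the two components, and finally exploit the additivity of $\cost$ under disjoint union to factor the minimization into the two tables. Concretely, by definition $T_G[d] = \min_{C \in \mathcal C(G;d)} \cost(G-C)$, where $\mathcal C(G;d) = \{ C \in \mathcal C(G) \mid \diff(G-C) = d \}$, and the goal is to show this equals $\min_{d'} \{ T_{G_1}[d'] + T_{G_2}[d-d'] \}$.

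First I would apply Lemma~\ref{lem:copyset-calculation-rules}(\ref{eqn:CGd-union}), which states $\mathcal C(G;d) = \bigcup_{d'} \left( \mathcal C(G_1;d') \otimes \mathcal C(G_2;d-d') \right)$. This tells us that ranging $C$ over $\mathcal C(G;d)$ is the same as ranging $d'$ over all values and then ranging independently over pairs $(C_1,C_2)$ with $C_1 \in \mathcal C(G_1;d')$ and $C_2 \in \mathcal C(G_2;d-d')$, where $C = C_1 \cup C_2$. Hence the outer minimum splits as $T_G[d] = \min_{d'} \min_{C_1,C_2} \cost(G - (C_1 \cup C_2))$.

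Next I would use the signature-linearity of $\cost$, which was observed immediately before the statement: $\cost((G_1 \cup G_2) - (C_1 \cup C_2)) = \cost(G_1 - C_1) + \cost(G_2 - C_2)$. Since the first summand depends only on $C_1$ and the second only on $C_2$, and since in the product $\mathcal C(G_1;d') \otimes \mathcal C(G_2;d-d')$ the sets $C_1$ and $C_2$ vary independently, the minimum of the sum factors into the sum of the minima, giving $\min_{C_1,C_2} \left[ \cost(G_1-C_1) + \cost(G_2-C_2) \right] = T_{G_1}[d'] + T_{G_2}[d-d']$. Taking the outer minimum over $d'$ then yields the claimed formula.

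The argument is essentially a routine consequence of the two prior results, so there is no real obstacle; the only point deserving a word of care is the factoring of a minimum-of-a-sum over a product set into a sum of minima. This is valid precisely because the product structure decouples $C_1$ and $C_2$, and it remains correct in the degenerate case where some $\mathcal C(G_i;d')$ is empty, since the convention $T_{G_i}[\cdot] = \infty$ makes both sides equal $\infty$ for the corresponding split of $d$.
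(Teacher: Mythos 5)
Your proposal is correct and follows essentially the same route as the paper's proof: unfold the definition of $T_G[d]$, decompose $\mathcal C(G;d)$ via Lemma~\ref{lem:copyset-calculation-rules}~(\ref{eqn:CGd-union}), and use the additivity of $\cost$ under disjoint union (Lemma~\ref{lem:signature-linear}~(\ref{eqn:sig-union})) to factor the minimum over the product set into a sum of minima. Your explicit remark about the empty-copy-set/$\infty$ convention is a small additional care point the paper leaves implicit, but the argument is otherwise identical.
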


\renewcommand{\endproof}{\doendproof}
\begin{proof}
	Applying the definition of $T_G[\cdot]$ as well as
	\Cref{lem:copyset-calculation-rules}~(\ref{eqn:CGd-union}) and
	\Cref{lem:signature-linear}~(\ref{eqn:sig-union}) yields
\begin{eqnarray*}
	T_G[d] & = & \min_{C \in \mathcal C(G; d)} \cost(G - C)
	         =   \min_{C \in \bigcup_{d'} \left( \mathcal C(G_1; d') \otimes \mathcal C(G_2; d - d') \right)}
	             \cost(G - C) \\
	       & = & \min_{d'} \left\{
	             \min_{C \in \mathcal C(G_1; d') \otimes \mathcal C(G_2; d - d')}
	             \cost(G - C) \right\} \\
	       & = & \min_{d'} \left\{
	             \min_{C_1 \in \mathcal C(G_1; d')}
	             \cost(G_1 - C_1)
	             +
	             \min_{C_2 \in \mathcal C(G_2; d - d')}
	             \cost(G_2 - C_2) \right\} \\
	       & = & \min_{d'} \{ T_{G_1}[d'] + T_{G_2}[d - d'] \}. \hspace{6.48cm} \qed
\end{eqnarray*}
\end{proof}
\renewcommand{\endproof}{\hfill\qed\doendproof}

By iteratively applying \Cref{lem:components}, we compute $T_G[\cdot]$
for a disconnected \rtg $G$ with an arbitrary number of connected components.
In the following, we will analyze the running time needed for the combination
of all tables $T_{G_i}[\cdot]$ for the components $G_i$ of $G$.

\begin{lemma}
	\label{lem:running-time-disconnected}
	Let $G$ be an \rtg with $n$ vertices and connected components
	$G_1, \ldots, G_k$.  Given the tables $T_{G_i}[\cdot]$ for $i = 1, \ldots, k$,
	the table $T_G[\cdot]$ can be computed in $O(n^2)$ time.
\end{lemma}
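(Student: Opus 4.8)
The plan is to combine the per-component tables one at a time using \Cref{lem:components}, and then to bound the total work by a careful summation argument rather than a naive worst-case estimate. First I would observe that each individual table is short. For a component $G_i$ with $n_i = |V(G_i)|$ vertices, any copy set $C_i \in \mathcal C(G_i)$ leaves an \odrtg $G_i - C_i$ that is a disjoint union of paths and cycles, and the number of these is at most $n_i$. Writing $\sig(G_i - C_i) = (X, a_2, a_3)$, we therefore have $|\diff(G_i - C_i)| = |a_2 - a_3| \le a_2 + a_3 \le n_i$. Hence $T_{G_i}[d] = \infty$ whenever $|d| > n_i$, so $T_{G_i}[\cdot]$ has only $O(n_i)$ finite entries.

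Next I would process the components in order, maintaining the table of the partial union $H_j = G_1 \cup \cdots \cup G_j$. Setting $N_j = \sum_{i \le j} n_i$, the same bound shows that $T_{H_j}[\cdot]$ is supported on $|d| \le N_j$ and thus has $O(N_j)$ finite entries. By \Cref{lem:components},
\[
  T_{H_j}[d] = \min_{d'} \{ T_{H_{j-1}}[d'] + T_{G_j}[d - d'] \},
\]
which is a $(\min,+)$-convolution of a table with $O(N_{j-1})$ finite entries and one with $O(n_j)$ finite entries. Iterating over all pairs $(d', d-d')$ at which the two operands are simultaneously finite computes the entire convolution in $O(N_{j-1} \cdot n_j)$ time.

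Finally I would sum the cost of the $k-1$ combination steps, which is $\sum_{j=2}^{k} O(N_{j-1} \cdot n_j)$. Using $N_{j-1} \le n$ for every $j$, this is bounded by $O\!\left( n \sum_{j=1}^{k} n_j \right) = O(n \cdot n) = O(n^2)$, as claimed. The one point that needs care, and the step I expect to be the main obstacle, is precisely this accounting: a naive reading suggests $k$ merges of tables of size up to $O(n)$, giving only an $O(k n^2)$ bound, and it is solely the observation that the $j$-th merge costs $N_{j-1} n_j$ together with $\sum_j n_j = n$ that collapses the estimate to $O(n^2)$.
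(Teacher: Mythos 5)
Your proposal is correct and follows essentially the same route as the paper's proof: iteratively merging the component tables into a running union, bounding the $j$-th merge by $O(N_{j-1}\cdot n_j)$ via the $O(h_1\cdot h_2)$ cost of \Cref{lem:components}, and collapsing the sum to $O(n^2)$ using $N_{j-1}\le n$ and $\sum_j n_j \le n$. Your explicit justification that each table has only $O(n_i)$ finite entries (via $|a_2-a_3|\le n_i$) is a nice touch that the paper leaves implicit, but the argument is the same.
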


\begin{proof}
  Let $n_i$ denote the number of vertices of $G_i$.  For two graphs
  $H_1$ and $H_2$ with $h_1$ and $h_2$ vertices, respectively,
  computing $T_{H_1 \cup H_2}[\cdot]$ according to
  \Cref{lem:components} takes time $O(h_1 \cdot h_2)$ and the table
  size is $O(h_1 + h_2)$.  Thus, iteratively combining the table for
  $G_{i + 1}$ with the table for $\bigcup_{j = 0}^i G_j$ takes time
  $O(\sum_{i = 1}^{k - 1} n_{i + 1} \sum_{j = 1}^i n_j)$.  It is
  $\sum_{i = 1}^{k - 1} n_{i + 1} \sum_{j = 1}^i n_j \le \sum_{i =
    1}^{k - 1} n_{i + 1} n = n \sum_{i = 1}^{k - 1} n_{i + 1} \le
  n^2$.  Hence, the running time is $O(n^2)$.
\end{proof}

\subsection{Tree \rtgs}
\label{sec:tree-rtgs}

For a tree \rtg $G$, we compute $T_G[\cdot]$ in a bottom-up fashion.
The direction of the edges naturally defines a unique root vertex $r$
that has no incoming edges and we consider $G$ as a rooted tree.  For
a vertex $v$, we denote by $G(v)$ the subtree of $G$ with root $v$.
Let $v$ be a vertex with children $v_1,\dots,v_k$.

How does a copy
set $C$ of $G(v)$ look like?  Clearly, $G(v)-C$ contains precisely one
of the outgoing edges of $v$, say $(v,v_j)$.  Then $Z_j = \{ (v,v_i)
\mid i \ne j\} \subseteq C$.  Graph $G(v)-Z_j$ has connected
components $G(v_i)$ for $i \ne j$, whose union we denote $G_{\neg j}$,
and one additional connected component $G^+(v_j)$ that is obtained
from $G(v_j)$ by adding the vertex $v$ and the edge $(v,v_j)$.  This
forms a proper partition of $G(v) - Z_j$.  As above, we decompose the
copy set $C-Z_j$ further into a union of a copy set $C_{\neg j}$ of
$G_{\neg j}$ and a copy set $C_j$ of $G^+(v_j)$.  Graph $G_{\neg j}$
is disconnected and can be handled as above.  Note that the only child
of the root of $G^+(v_j)$ is $v_j$ and hence $C_j$ is a copy set of
$G(v_j)$.

For expressing the cost
and difference measures for copy sets of $G^+(v_j)$ in terms of copy
sets of $G(v_j)$, we use the correction terms $\Delta_{\cost}$ and
$\Delta_{\diff}$.  By \Cref{lem:signature-linear}~(\ref{eqn:sig-new-vertex}),
$\diff(G^+(v_j)-C_j) = \diff(G(v_j) - C_j)
+ \Delta_{\diff}(s)$, where $s$ is the size of the \emph{root path}
$P(v_j, C_j)$ of
$G(v_j)-C_j$, i.e., the size of the connected component of $G(v_j)-C_j$
containing $v_j$.  An analogous statement holds for $\cost$.
More precisely, it suffices to know $s$ modulo $4$.
Therefore, we further decompose our copy sets as follows, which
allows us to formalize our discussion.

\begin{definition}
  For a tree \rtg $G$ with root $v$ and children $v_1, \ldots, v_k$,
  we define\\
  $\mathcal C(G; d, s) = \{ C \in \mathcal C(G; d) \mid
  |P(v, C)| \equiv s~(\mathrm{mod~} 4) \}$.  We further decompose
  these by $\mathcal C(G; d, s, j) = \{ C \in \mathcal C(G; d, s) \mid
  (v, v_j) \not\in C \}$, according to which outgoing edge of the root
  is not in the copy set.
\end{definition}

\noindent{}The following lemma gives calculation rules for composing copy sets.

\begin{lemma}
	\label{lem:tree-copyset-rules}
	Let $G$ be a tree \rtg with root $v$ and children $v_1, \ldots, v_k$ and
	for a fixed vertex $v_j$, $1 \le j \le k$, let $G^+(v_j)$ be the subgraph
	of $G$ induced by the vertices in $G(v_j)$ together with $v$.
	Let further $G_{\neg j} = \bigcup_{i = 1,i \ne j}^k G(v_i)$ and
	$Z_j = \{ (v, v_i) \mid i \ne j \}$.  Then
	\begin{compactenum}[(i)]
		\item \label{eqn:CGds} $\mathcal C(G; d)    = \bigcup_{s = 0}^3 \mathcal C(G; d, s)$ and
		\label{eqn:CGdsj} $\mathcal C(G; d, s) = \bigcup_{j = 1}^k \mathcal C(G; d, s, j)$.
		\item \label{eqn:CGds-one-child}
			$\mathcal C(G^+(v_j); d, s) = \mathcal C(G(v_j); d - \Delta_{\diff}(s), s - 1)$.
		\item \label{eqn:CGdsj-multiple-children}
			$\mathcal C(G; d, s, j) = \bigcup_{d'} \left( \mathcal C(G_{\neg j}; d') \otimes \mathcal C(G^+(v_j); d - d', s) \otimes \{ Z_j \} \right)$.
	\end{compactenum}
\end{lemma}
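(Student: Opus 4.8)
The plan is to prove each of the three equations by carefully unwinding the definitions introduced just above, using the correction-term machinery from Lemma~\ref{lem:signature-linear} and the union/tensor rules of Lemma~\ref{lem:copyset-calculation-rules}. I should stress that this lemma is essentially bookkeeping: it records how a copy set of $G(v)$ decomposes once we commit to which outgoing edge of the root survives. Let me describe how I would handle each part.

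For part~(\ref{eqn:CGds}), both equalities are immediate partitions. The first equality $\mathcal C(G;d) = \bigcup_{s=0}^3 \mathcal C(G;d,s)$ holds because every copy set in $\mathcal C(G;d)$ has a well-defined root path $P(v,C)$, whose size modulo~$4$ lands in exactly one of the classes $s \in \{0,1,2,3\}$; the union is therefore a disjoint partition. The second equality $\mathcal C(G;d,s) = \bigcup_{j=1}^k \mathcal C(G;d,s,j)$ is justified by the observation already made in the surrounding text: since $G(v)-C$ is an \odrtg, the root $v$ retains exactly one outgoing edge $(v,v_j)$, so every copy set omits precisely one such edge and thus lies in exactly one $\mathcal C(G;d,s,j)$. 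Again the union is disjoint.

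For part~(\ref{eqn:CGds-one-child}), the key point is that the only child of the root of $G^+(v_j)$ is $v_j$, so a copy set of $G^+(v_j)$ is literally a copy set of $G(v_j)$ (no outgoing edge of the new root $v$ is ever removed, since removing it would disconnect $v$ and violate the $\odrtg$ condition). I then apply \Cref{lem:signature-linear}~(\ref{eqn:sig-new-vertex}): adding $v$ to $G(v_j)-C_j$ via the edge $(v,v_j)$ enlarges the component containing $v_j$ from size $|P(v_j,C_j)|$ to $|P(v_j,C_j)|+1$, so $\diff(G^+(v_j)-C_j) = \diff(G(v_j)-C_j) + \Delta_{\diff}(s')$ where $s'$ is the old root-path size. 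Rearranging the indexing constraints — a copy set lands in $\mathcal C(G^+(v_j);d,s)$ exactly when the enlarged root path has size $s \bmod 4$, i.e.\ the old one has size $s-1 \bmod 4$, and when $\diff(G(v_j)-C_j) = d - \Delta_{\diff}(s)$ — yields the stated identity. I would take care that the correction term is evaluated at the argument consistent with \Cref{lem:signature-linear}, and that $\Delta_{\diff}(s)=\Delta_{\diff}(s+4)$ so the modular bookkeeping is well-defined.

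For part~(\ref{eqn:CGdsj-multiple-children}), I use the proper partition $G(v)-Z_j = G_{\neg j} \cup G^+(v_j)$ noted in the text, together with the fact that every copy set in $\mathcal C(G;d,s,j)$ contains $Z_j$ (the omitted edges at the root). Writing $C = Z_j \cup C_{\neg j} \cup C_j$ with $C_{\neg j}$ a copy set of $G_{\neg j}$ and $C_j$ a copy set of $G^+(v_j)$, I invoke \Cref{lem:copyset-calculation-rules}~(\ref{eqn:CGd-union}) and the additivity of $\diff$ from \Cref{lem:signature-linear}~(\ref{eqn:sig-union}) to split the difference $d$ as $d' + (d-d')$ across the two parts; the root-path size $s$ is entirely determined by the $G^+(v_j)$ part, since $G_{\neg j}$ does not contain $v$. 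This gives the claimed decomposition over all choices of $d'$. The main obstacle, such as it is, lies in keeping the modular shift in part~(\ref{eqn:CGds-one-child}) consistent with the sign and argument conventions of the correction term, and in verifying that $\diff$ is genuinely unaffected by which component carries the root path in part~(\ref{eqn:CGdsj-multiple-children}); once those conventions are pinned down, each equality follows by matching membership conditions on both sides.
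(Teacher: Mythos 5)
Your proposal is correct and takes essentially the same route as the paper's own proof: part (i) directly from the definitions, part (ii) via the one-to-one correspondence between copy sets of $G^+(v_j)$ and of $G(v_j)$ combined with the correction term, and part (iii) by observing that the copy sets in $\mathcal C(G;d,s,j)$ are exactly those containing $Z_j$, then applying the proper partition $G - Z_j = G_{\neg j} \cup G^+(v_j)$ together with \Cref{lem:copyset-calculation-rules}. Two remarks. First, your parenthetical reason why $(v,v_j)$ lies in no copy set of $G^+(v_j)$ --- that removing it \enquote{would disconnect $v$ and violate the \odrtg condition} --- is not valid: an isolated vertex is perfectly admissible in an \odrtg. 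The true reason is the cardinality constraint $|C| = \sum_{u} \max\{\deg^+(u)-1,0\}$ in the definition of a copy set, which forces exactly $\max\{\deg^+(u)-1,0\}$ outgoing edges of every vertex $u$ into $C$, hence none at $v$, whose out-degree in $G^+(v_j)$ is $1$ (the paper asserts the same fact without proof, so this does not change the structure of the argument). Second, the tension you flag about the argument of the correction term is real: a strict application of \Cref{lem:signature-linear}~(ii), where $\Delta_{\diff}$ is evaluated at the size of the component \emph{before} the root is attached, gives $\Delta_{\diff}(s-1)$ rather than the $\Delta_{\diff}(s)$ appearing in the statement; the paper's own proof blurs exactly the same point (it writes $\sig(P_{s+1})$ where $\sig(P_{s-1})$ is meant), so your treatment matches the paper's level of precision, and the off-by-one is a bookkeeping convention that merely has to be applied consistently here and in \Cref{lem:combine-children}, which consumes this decomposition.
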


\begin{proof}
  The statements in (\ref{eqn:CGds}) follow immediately from the definitions of
  $\mathcal C(G;d,s)$ and $\mathcal C(G;d,s,j)$.
  We continue with Statement~(\ref{eqn:CGds-one-child}).  Since $v$ in $G^+(v_j)$
  has only one child $v_j$, the edge $(v, v_j)$ is not in any copy set
  of $G^+(v_j)$.  Therefore, the copy sets of $\mathcal C(G^+(v_j))$ and $\mathcal
  C(G(v_j))$ are in one-to-one correspondence.  We need to understand
  how the partition into copy sets with difference measure $d$ and
  root path length $s$ (modulo $4$) respects this bijection.  Let $s$ be
  the root path size of $G^+(v_j)-C$ for a copy set $C \in \mathcal C(G^+(v_j))$.
  Obviously, $|P(G(v_j) - C)| = |P(G^+(v_j)-C)| - 1 = s-1$.  Moreover, going
  from $G^+(v_j)-C$ to $G(v_j)-C$ replaces a connected component of size $s$
  by one of size $s-1$.  Therefore $\sig(G(v_j) - C) = \sig(G^+(v_j)-C) -
  \sig(P_s) + \sig(P_{s+1})$.  By the signature-linearity of $\diff$, we
  have $\diff(G(v_j) - C) = \diff(G^+(v_j)-C) - \Delta_{\diff}(s)$.  Note further
  that $\Delta_{\diff}(s) = \Delta_{\diff}(s+4)$ for every value of $s$, and hence
  it suffices to know $s \mod 4$.  Overall, it follows that a copy set
  $C \in \mathcal C(G^+(v_j);d,s)$ is a copy set of $G(v_j)$ with difference
  measure $\diff(G^+(v_j)-C) - \Delta_{\diff}(s)$ and root path size modulo 4 being
  $s-1$.  Thus $C \in C(G(v_j), d - \Delta_{\diff}(s), s-1)$.  And
  conversely $C \in C(G(v_j) , d-\Delta_{\diff}(s),s-1)$ satisfies $C
  \in \mathcal C(G^+(v_j);d,s)$.

  Next, we consider Statement~(\ref{eqn:CGdsj-multiple-children}).
  First, observe that the copy sets~$\mathcal C$ of $G$ whose root path
  starts with $(v,v_j)$ are exactly those copy sets of $G$ that
  contain all edges in $Z_j$.  These sets correspond bijectively to
  copy sets of $G-Z_j$.  Thus $\mathcal C(G;d,s,j) = \mathcal
  C(G-Z_j;d,s) \otimes \{Z_j\}$.  Observe that $G-Z_j = G_{\neg j}
  \cup G^+(v_j)$ is a proper partition of $G - Z_j$.  Furthermore, the
  root path of any copy set of this graph lies in $G^+(v_j)$.
  Therefore, \Cref{lem:copyset-calculation-rules}~(\ref{eqn:CGd-union})
  implies that $\mathcal
  C(G-Z_j;d,s) = \bigcup_{d'} (\mathcal C(G_{\neg j};d') \otimes
  (\mathcal C(G(v_j)^+;d-d',s)$.  Combining this with the previously
  derived description of $\mathcal C(G;d,s,j)$ yields
  Statement~(\ref{eqn:CGdsj-multiple-children}).
\end{proof}

To make use of this decomposition of copy sets,
we extend our table $T$ with an additional parameter $s$ to keep
track of the size of the root path modulo 4.  We call the resulting table
$\tT$.  More formally,
$
	\tT_v[d, s] = \min_{C \in \mathcal C(G(v); d, s)} \cost(G(v) - C).
$
It is not hard to see that $T_G[\cdot]$ can be computed
from $\tT_r[\cdot, \cdot]$ for the root $r$ of a tree \rtg $G$.

\begin{lemma}
	\label{lem:forget-tilde}
	Let $G$ be a tree \rtg with root $r$.  Then $T_G[d] = \min_s \tT_r[d, s]$.
\end{lemma}

\begin{proof}
	Using the definitions of $T_G[\cdot]$ and $\tT_r[\cdot, \cdot]$, we obtain
	\begin{equation*}
		T_G[d] = \min_{C \in \mathcal C(G; d)} \cost(G - C)
		       = \min_{s \in \{ 0, \dots, 3 \}}
		         \min_{C \in \mathcal C(G; d, s)} \cost(G - C)
		       = \min_{s \in \{ 0, \dots, 3 \}} \tT_r[d, s].
	\end{equation*}
\end{proof}

To compute $\tT_v[\cdot, \cdot]$ in a bottom-up fashion, we exploit
the decompositions from \Cref{lem:tree-copyset-rules} and the fact
that we can update the cost function from $G(v_j) - C_j$ to
$G^+(v_j) - C_j$ using the correction term $\Delta_{\cost}$.
The proof is similar to that of \Cref{lem:components}.

\begin{lemma}
	\label{lem:combine-children}
	Let $G$ be a tree \rtg, let $v$ be a vertex of $G$ with children
	$v_1, \ldots, v_k$, and let $G(v_i) = (V_i, E_i)$ for $i = 1, \ldots, k$.
	Let further $G_{\neg j} = (V_{\neg j}, E_{\neg j}) = \bigcup_{i = 1, i \ne j}^k G(v_i)$.
	Then the following equation holds.\\
	$\displaystyle
	\tT_v[d, s] = \min_{j \in \{1, \ldots, k\}} \min_{d'} T_{G_{\neg j}}[d'] + \tT_{v_j}[d - d' - \Delta_{\diff}(s), (s - 1)~\mathrm{mod~} 4] + \Delta_{\cost}(s)$
\end{lemma}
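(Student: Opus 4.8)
The plan is to follow the same pattern as the proof of \Cref{lem:components}, unfolding the definition of $\tT_v[\cdot,\cdot]$ and repeatedly applying the decomposition rules of \Cref{lem:tree-copyset-rules} together with the signature-linearity of $\cost$. I start from $\tT_v[d,s] = \min_{C \in \mathcal C(G(v); d, s)} \cost(G(v) - C)$ and split the outer minimization according to which outgoing edge of the root $v$ survives in $G(v)-C$. By \Cref{lem:tree-copyset-rules}~(\ref{eqn:CGdsj}) we have $\mathcal C(G(v); d, s) = \bigcup_{j=1}^k \mathcal C(G(v); d, s, j)$, so this turns the minimization into $\min_{j \in \{1,\dots,k\}}$ of the minimum cost over each class $\mathcal C(G(v); d, s, j)$.

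Next I would expand each class using \Cref{lem:tree-copyset-rules}~(\ref{eqn:CGdsj-multiple-children}), which writes $\mathcal C(G(v); d, s, j)$ as $\bigcup_{d'} \left( \mathcal C(G_{\neg j}; d') \otimes \mathcal C(G^+(v_j); d-d', s) \otimes \{Z_j\} \right)$. For any copy set $C = C_{\neg j} \cup C_j \cup Z_j$ of this form, the graph $G(v) - C$ is the disjoint union of $G_{\neg j} - C_{\neg j}$ and $G^+(v_j) - C_j$, so \Cref{lem:signature-linear}~(\ref{eqn:sig-union}) lets me split the cost additively as $\cost(G_{\neg j} - C_{\neg j}) + \cost(G^+(v_j) - C_j)$, exactly as in \Cref{lem:components}. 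The two summands depend on disjoint parts of $C$, so the minimum factors: the minimum over $C_{\neg j} \in \mathcal C(G_{\neg j}; d')$ is by definition $T_{G_{\neg j}}[d']$, and it remains to express $\min_{C_j \in \mathcal C(G^+(v_j); d-d', s)} \cost(G^+(v_j) - C_j)$ through $\tT_{v_j}$.

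For this last translation I would invoke \Cref{lem:tree-copyset-rules}~(\ref{eqn:CGds-one-child}), which identifies $\mathcal C(G^+(v_j); d-d', s)$ with $\mathcal C(G(v_j); d - d' - \Delta_{\diff}(s), s-1)$ via the bijection that forgets the root vertex $v$. Under this bijection the cost changes only by the correction term: since $G^+(v_j)$ arises from $G(v_j)$ by attaching $v$ to the root path, \Cref{lem:signature-linear}~(\ref{eqn:sig-new-vertex}) gives $\cost(G^+(v_j) - C_j) = \cost(G(v_j) - C_j) + \Delta_{\cost}(s)$, with the same residue $s$ fixed in \Cref{lem:tree-copyset-rules}~(\ref{eqn:CGds-one-child}). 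Because $\Delta_{\cost}(s)$ depends only on the fixed parameter $s$ and not on the inner minimization, it can be pulled out, and the remaining minimum over the bijectively corresponding copy sets of $G(v_j)$ is precisely $\tT_{v_j}[d - d' - \Delta_{\diff}(s), (s - 1)~\mathrm{mod~} 4]$. Collecting the two summands and minimizing over $d'$ and $j$ then yields the claimed formula.

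The chain of substitutions itself is routine and formally parallels \Cref{lem:components}; the only step demanding real care is the bookkeeping of the two table parameters through the correction terms. Concretely, I must verify that retaining the edge $(v,v_j)$ makes the root path of $G(v) - C$ coincide with the root path of $G^+(v_j) - C_j$, so that fixing its residue to $s$ forces the root path of $G(v_j) - C_j$ to have residue $s-1$, and that passing from $G^+(v_j)$ down to $G(v_j)$ shifts the difference measure by $-\Delta_{\diff}(s)$ while shifting the cost by $+\Delta_{\cost}(s)$. Keeping these shifts consistent with the conventions established in \Cref{lem:tree-copyset-rules}~(\ref{eqn:CGds-one-child}) is the crux; everything else follows from the additive manipulations already used for disconnected graphs.
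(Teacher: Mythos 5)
Your proposal is correct and follows essentially the same route as the paper's own proof: both unfold $\tT_v[d,s]$ via \Cref{lem:tree-copyset-rules}, split off $T_{G_{\neg j}}[d']$ by additivity of $\cost$ over the disjoint union $G - Z_j = G_{\neg j} \cup G^+(v_j)$, and convert the $G^+(v_j)$ term into $\tT_{v_j}[d - d' - \Delta_{\diff}(s), (s-1)\bmod 4] + \Delta_{\cost}(s)$ using the one-child bijection and the correction terms. The only cosmetic difference is that you attribute the cost additivity directly to \Cref{lem:signature-linear}~(i), which is if anything a slightly more precise citation than the paper's.
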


\begin{proof}
  According to the definition of $\tT_v[d, s]$ and
  \Cref{lem:tree-copyset-rules}~(\ref{eqn:CGdsj}), we find that
  \begin{equation}
    \label{eqn:combine-children-start}
    \tT_v[d, s] = \min_{C \in \mathcal C(G; d, s)} \cost(G - C)
    =   \min_j \min_{C \in \mathcal C(G; d, s, j)} \cost(G - C)
  \end{equation}
  Using \Cref{lem:tree-copyset-rules}~(\ref{eqn:CGdsj-multiple-children}) yields
  \begin{equation}
    \min_{C \in \mathcal C(G; d, s, j)} \cost(G - C) =
    \min_{d'} \min_{\substack{X \in \mathcal C(G_{\neg j}; d')\\Y \in \mathcal C(G^+(v_j); d - d', s)}} \cost(G - X - Y - Z_j).
  \end{equation}
  Note that $G - Z_j = G_{\neg j} \cup G^+(v_j)$.  By
  \Cref{lem:copyset-calculation-rules}, we have that for $X \in
  \mathcal C(G_{\neg j}; d'), Y \in \mathcal C(G^+(v_j); d - d', s)$,
  it is $\cost(G - X - Y - Z_j) = \cost(G_{\neg j} \cup G^+(v_j) - X -
  Y) = \cost(G_{\neg j} - X) + \cost(G^+(v_j) - Y)$.  Therefore,
  \begin{equation}
    \begin{array}{>{\displaystyle}l}
      \min_{\substack{X \in \mathcal C(G_{\neg j}; d')\\Y \in \mathcal C(G^+(v_j); d - d', s)}} \cost(G - X - Y - Z_j) \\
      = \min_{X \in \mathcal C(G_{\neg j}; d')} \cost(G_{\neg j} - X) + \min_{Y \in \mathcal C(G^+(v_j); d - d', s)} \cost(G^+(v_j) - Y).
    \end{array}
  \end{equation}
  By definition $\min_{X \in \mathcal C(G_{\neg j}; d')} \cost(G_{\neg
    j} - X) = T_{G_{\neg j}}[d']$.  Furthermore, $G^+(v_j)$ is a tree
  \rtg whose root $v$ has the single child $v_j$.  Hence, by
  \Cref{lem:tree-copyset-rules}~(\ref{eqn:CGds-one-child}) and
  \Cref{lem:signature-linear}~(\ref{eqn:sig-new-vertex}), we find
  \begin{equation}
    \label{eqn:combine-children-end}
    \begin{array}{>{\displaystyle}l}
      \phantom{=}\min_{Y \in \mathcal C(G^+(v_j); d - d', s)} \cost(G^+(v_j) - Y) \\
      = \min_{Y \in \mathcal C(G(v_j); d - d' - \Delta_{\diff}(s), s - 1)} \cost(G(v_j) - Y) + \Delta_{\cost}(s) \\
      = \tT_{v_j}[d - d' - \Delta_{\diff}(s), s - 1] + \Delta_{\cost}(s)
    \end{array}
  \end{equation}
  Combining Equations
  \ref{eqn:combine-children-start}--\ref{eqn:combine-children-end}
  yields the claim.
\end{proof}

For leaves $v$ of a tree \rtg $G$, $\tT_v[0, 1] = 0$ and all other entries
are $\infty$.  We compute $T_G[\cdot]$ by iteratively applying
\Cref{lem:combine-children} in a bottom-up fashion, using
\Cref{lem:forget-tilde} to compute $T[\cdot]$ from $\tT[\cdot, \cdot]$
in linear time when needed.

\begin{lemma}
	\label{lem:running-time-tree}
	Let $G = (V, E)$ be a tree \rtg with $n$ vertices and root $r$.
	The tables $\tT_r[\cdot, \cdot]$ and $T_G[\cdot]$ can be computed in
	$O(n^3)$ time.
\end{lemma}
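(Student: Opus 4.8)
The plan is to compute the tables $\tT_v[\cdot,\cdot]$ bottom-up (in post-order) along the rooted tree, spending $O(k_v \cdot n_v^2)$ time at each vertex $v$ with $k_v$ children and subtree size $n_v$, and then to bound the total by $\sum_v k_v n_v^2 = O(n^3)$ using that the children relation contributes exactly $n-1$ parent--child edges, i.e.\ $\sum_v k_v = n-1$.

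First I would record the base case: for every leaf $v$ we have $\tT_v[0,1]=0$ and all other entries $\infty$, as already noted before the statement. For an internal vertex $v$ with children $v_1,\dots,v_k$, I would fill in $\tT_v[d,s]$ for all $d$ and all $s\in\{0,1,2,3\}$ by evaluating the recurrence of \Cref{lem:combine-children}. Fixing a special child $v_j$, that recurrence combines the table $T_{G_{\neg j}}[\cdot]$ of the disconnected \rtg $G_{\neg j}=\bigcup_{i\ne j}G(v_i)$ with the column $\tT_{v_j}[\cdot,(s-1)\bmod 4]$ by a min-convolution over the split parameter $d'$, then applies the index shift by $\Delta_{\diff}(s)$ and the additive correction $\Delta_{\cost}(s)$ (both $O(1)$ per entry); finally I take the minimum over $j\in\{1,\dots,k\}$ and over the four values of $s$.

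For the running time at $v$, the dominant work is, for each choice of special child $j$, to obtain $T_{G_{\neg j}}[\cdot]$. Since $G_{\neg j}$ is disconnected with components $G(v_i)$ for $i\ne j$, and since each component table $T_{G(v_i)}[\cdot]$ is available from $\tT_{v_i}[\cdot,\cdot]$ by \Cref{lem:forget-tilde} in $O(n_{v_i})$ time, \Cref{lem:running-time-disconnected} lets me build $T_{G_{\neg j}}[\cdot]$ in $O(n_v^2)$ time (as $G_{\neg j}$ has fewer than $n_v$ vertices). The subsequent combination with the four columns of $\tT_{v_j}$ is another min-convolution of tables of size $O(n_v)$, again $O(n_v^2)$ time and independent of the $O(n_v^2)$ spent on $T_{G_{\neg j}}$. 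Hence each special child $j$ costs $O(n_v^2)$, and vertex $v$ costs $O(k_v\cdot n_v^2)$ in total; computing $T_G[\cdot]$ from $\tT_r[\cdot,\cdot]$ afterwards adds only $O(n)$ by \Cref{lem:forget-tilde}.

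Finally I would sum over all vertices. Using $n_v\le n$, the total is $\sum_v k_v n_v^2 \le n^2 \sum_v k_v = n^2(n-1) = O(n^3)$. The step I expect to be the crux is precisely this accounting: a purely per-vertex estimate is too weak, since a single high-degree vertex (e.g.\ the root of a star) already incurs $\Theta(k_v n_v^2)=\Theta(n^3)$, and recomputing $T_{G_{\neg j}}$ separately for every child looks wasteful. The $O(n^3)$ bound is salvaged only by the global telescoping $\sum_v k_v = n-1$, and one must verify that each separate recomputation of $T_{G_{\neg j}}$ genuinely stays within $O(n_v^2)$ by reusing the already-computed children tables rather than descending into the subtrees again.
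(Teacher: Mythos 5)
Your proposal is correct and follows essentially the same route as the paper's proof: bottom-up evaluation of \Cref{lem:combine-children}, computing $T_{G_{\neg j}}[\cdot]$ for each child $j$ in quadratic time via \Cref{lem:forget-tilde} and \Cref{lem:running-time-disconnected}, a quadratic min-convolution per child, and the global bound $\sum_v \deg^+(v) = n-1$ to turn the per-vertex cost $O(\deg^+(v)\cdot n^2)$ into $O(n^3)$ overall. The ``crux'' you identify---that the degree sum, not a per-vertex bound, saves the analysis---is exactly the argument the paper uses.
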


\begin{proof}
	First observe that given $\tT_v[\cdot, \cdot]$ for $v \in V$, table
	$T_{G(v)}[\cdot]$ can be computed in linear time according to
	\Cref{lem:forget-tilde}.  In particular, $T_G[\cdot]$ can be computed from
	$\tT_r[\cdot, \cdot]$ in linear time.

	We now bound the computation time for $\tT_r[\cdot, \cdot]$.
	Let $v \in V$ with children $v_1, \ldots, v_k$.  Given the tables
	$\tT_{v_i}[\cdot, \cdot]$, we can compute $\tT_v[\cdot, \cdot]$ by
	\Cref{lem:combine-children}.  More precisely, for each $j = 1, \ldots, k$,
	we first compute $T_{G_{\neg j}}[\cdot]$ in quadratic time by
	\Cref{lem:running-time-disconnected} followed by $O(n)$ table lookups,
	one for each value of $d'$.  Hence, processing $v$ takes time
	$O(\deg^+(v) \cdot n^2)$.  Since $\sum_{v \in V} \deg^+(v) = n - 1$,
	the total processing time to compute $\tT_r[\cdot, \cdot]$ in a bottom-up
	fashion is $O(n^3)$.
\end{proof}

\subsection{Connected \rtgs Containing a Cycle}
\label{sec:cycle-rtgs}

We now look at connected \rtgs that contain a cycle.  We first
introduce an additional decomposition for copy sets to simplify the
following calculations.

\begin{lemma}
	\label{lem:cycle-copyset-rules}
	Let $G = (V, E)$ be a connected \rtg containing a directed cycle $K$ and
	let $e_1, \ldots, e_k$ denote the edges of $K$ whose source has out-degree
	at least $2$.  Let further $O = \{ (u, v) \in E \mid u \in K,
	(u, v) \not\in K \}$.
	Then
        \begin{equation*}
          \mathcal C(G;d) = \mathcal C(G - O;d) \otimes \{ O \} \cup \bigcup_{i=1}^k \mathcal C(G - e_i; d) \otimes \{ \{ e_i \} \}.
        \end{equation*}
\end{lemma}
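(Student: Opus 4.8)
We have a connected RTG $G$ containing a directed cycle $K$. The edges $e_1,\dots,e_k$ are the cycle edges whose source has out-degree $\ge 2$, and $O$ collects all non-cycle edges emanating from cycle vertices. The claim decomposes the copy sets of $G$ into those containing all of $O$ (equivalently, keeping the whole cycle $K$ intact in $G-C$) and those omitting exactly one cycle edge $e_i$ (breaking the cycle at $e_i$).

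**The plan.** The proof is a structural case analysis on how a copy set $C\in\mathcal C(G;d)$ interacts with the cycle $K$. The key observation is that $G-C$ must be an outdegree-1 RTG, so every vertex $v$ keeps exactly one outgoing edge when $\deg^+(v)\ge 1$. For a cycle vertex $u$, its cycle edge is one candidate to keep; the alternatives are the non-cycle edges in $O$. I would argue that in $G-C$, either the entire cycle $K$ survives, or exactly one cycle edge is removed. The point is that $G$ is connected and contains a single cycle, so removing two or more cycle edges would disconnect the structure in a way that leaves some vertex with outdegree 0 unnecessarily, or—more to the point—the definition of a copy set fixes $|C|=\sum_v\max\{\deg^+(v)-1,0\}$, and this count is exactly achieved only by the two described families.

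**The two cases.** First I would show the inclusion $\supseteq$. If $C\in\mathcal C(G-O;d)$, then $C\cup O$ removes all non-cycle edges at cycle vertices, keeping $K$ intact; one checks $C\cup O\in\mathcal C(G;d)$ by verifying the outdegree-1 property and the cardinality/difference constraints, using that adding $O$ removes exactly the surplus outgoing edges of cycle vertices and that $\diff$ is unaffected since the component structure of the non-cycle part is preserved. Symmetrically, if $C\in\mathcal C(G-e_i;d)$, then $C\cup\{e_i\}$ lies in $\mathcal C(G;d)$: removing the single cycle edge $e_i$ breaks $K$ into a path, and since $e_i$'s source had out-degree $\ge 2$, this is a legitimate choice of surplus edge to cut. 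For the reverse inclusion $\subseteq$, I take an arbitrary $C\in\mathcal C(G;d)$ and examine $C\cap E(K)$. The crucial step is to prove $|C\cap E(K)|\le 1$: since $G$ has exactly one cycle, each vertex contributes at most $\max\{\deg^+(v)-1,0\}$ edges to any copy set, and cutting two or more cycle edges would force the total to exceed this budget or leave a non-cycle vertex stranded. If $C\cap E(K)=\emptyset$, then $O\subseteq C$ (all surplus at cycle vertices must come from $O$), giving the first term; if $C\cap E(K)=\{e_i\}$, the source of $e_i$ must have out-degree $\ge 2$ (else $e_i$ is its unique outgoing edge and cannot be cut), so $e_i\in\{e_1,\dots,e_k\}$, giving the second term.

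**The main obstacle.** The delicate part is the cardinality argument showing that at most one cycle edge can be cut, and simultaneously that when no cycle edge is cut, \emph{all} of $O$ must be cut. This rests on the tight equality $|C|=\sum_v\max\{\deg^+(v)-1,0\}$: there is no slack, so every cycle vertex with $\deg^+\ge2$ must discard exactly its surplus, and a cycle vertex with $\deg^+=1$ (its only outgoing edge being its cycle edge) cannot have that edge cut at all. I would carefully track, for each cycle vertex, how many of its outgoing edges land in $C$, and confirm that the budget forces precisely one of the two global patterns. I expect the bookkeeping to be routine once the outdegree-1 constraint and the exact cardinality are invoked together; I would not belabor the $\diff$-tracking since it follows from signature-linearity exactly as in \Cref{lem:copyset-calculation-rules}.
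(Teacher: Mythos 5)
Your $\supseteq$ direction is sound and your overall decomposition strategy is the right one, but the step you yourself single out as ``the crucial step''---proving $|C\cap E(K)|\le 1$---is false, and no bookkeeping can establish it. By the tight cardinality in the definition of a copy set, $C\in\mathcal C(G)$ is exactly a choice, at every vertex of positive out-degree, of one outgoing edge to keep, with all other edges cut; nothing prevents two \emph{different} cycle vertices from each cutting their own cycle edge, since each such cut is charged to that vertex's own budget of $\deg^+(v)-1$. Concretely, let $K$ be the 2-cycle $u\to v\to u$ and add pendant edges $(u,x)$ and $(v,y)$. Then $C=\{(u,v),(v,u)\}$ is a legitimate copy set: $G-C$ consists of the two paths $u\to x$ and $v\to y$, hence is an \odrtg, and $|C|=2$ meets the budget $\sum_w\max\{\deg^+(w)-1,0\}=1+1$ exactly; no vertex is stranded and nothing exceeds the count. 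This also shows that your reading of the statement as a dichotomy (``cycle kept intact'' versus ``exactly one cycle edge cut'') is a misreading: the families on the right-hand side are not disjoint and do not partition $\mathcal C(G;d)$---in the example, $C$ lies in $\mathcal C(G-e_i;d)\otimes\{\{e_i\}\}$ for both $i=1$ and $i=2$.

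The lemma nevertheless holds, and the repair is to weaken your case analysis rather than to prove the cardinality bound; this is precisely what the paper's proof does. Split only on whether $C$ contains \emph{no} edge of $K$ or \emph{some} edge of $K$. In the first case, every cycle vertex keeps its cycle edge, so tightness forces all its remaining out-edges into $C$, i.e.\ $O\subseteq C$, and $C\setminus O\in\mathcal C(G-O;d)$. In the second case, the cut cycle edge must be one of $e_1,\dots,e_k$, because an edge of $K$ whose source has out-degree $1$ can never be cut (that vertex must keep its unique outgoing edge); and then $C\setminus\{e_i\}\in\mathcal C(G-e_i;d)$ \emph{regardless} of how many further cycle edges lie in $C$---they are simply carried along inside the copy set of $G-e_i$. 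The budget arithmetic works because the source of $e_i$ has out-degree at least $2$ in $G$, so the budget of $G-e_i$ is exactly one less than that of $G$. With ``exactly one'' replaced by ``at least one'' your argument goes through; as written, its central claim is unprovable.
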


\begin{proof}
  Every copy set $C \in
  \mathcal C(G;d)$ contains either some edge of $K$ or it contains all edges
  in $O$.  Note that edges of $K$ that are not among $e_1, \ldots, e_k$ are not
  contained in any copy set.  Thus, in the former case, $e_i \in C$ for some
  $i \in \{ 1, \ldots, k \}$ and hence
  $C \in \mathcal C(G-e_i;d) \otimes \{ \{ e_i \} \}$.
  In the latter case $C \setminus O$ is a copy set of $G - O$, hence
  $C \in \mathcal C(G - O; d) \otimes \{ O \}$.
  Conversely, any copy set in $\mathcal C(G - O; d) \otimes \{O\}$
  forms a copy set of $G$ and also every copy set in
  $\mathcal C(G-e_i;d) \otimes \{ \{ e_i \} \}$ for any value of $i$ forms a
  copy set of $G$.  This finishes the proof.
\end{proof}

As before, this decomposition can be used to efficiently compute
$T_G[\cdot]$ from the tables of smaller subgraphs of a connected \rtg $G$
containing a cycle.

\begin{lemma}
	\label{lem:cycle-rtg}
	Let $G = (V, E)$ be a connected \rtg containing a directed cycle $K$ and
	let $e_1, \ldots, e_k$ denote the edges of $K$ whose source has out-degree
	at least $2$.  Let further $O = \{ (u, v) \in E \mid u \in K,
	(u, v) \not\in K \}$.
	Then
	\[
		T_G[d] = \min \left\{ T_{G - O}[d],
		\min_{i=1}^k T_{G - e_i}[d] \right\}.
	\]
\end{lemma}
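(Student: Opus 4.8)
The plan is to lift the set-level decomposition of \Cref{lem:cycle-copyset-rules} to the level of the cost table $T_G[\cdot]$, exactly as the disconnected and tree cases lift their respective copy-set decompositions. The starting point is the definition $T_G[d] = \min_{C \in \mathcal C(G;d)} \cost(G-C)$. Substituting the union from \Cref{lem:cycle-copyset-rules} and using the elementary fact that minimizing a cost over a union of sets equals the minimum over each summand, I would write
\[
  T_G[d] = \min\left\{
    \min_{C \in \mathcal C(G-O;d)\otimes\{O\}} \cost(G-C),\
    \min_{i=1}^k\ \min_{C \in \mathcal C(G-e_i;d)\otimes\{\{e_i\}\}} \cost(G-C)
  \right\}.
\]

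The crux is then to show that each inner minimum equals the corresponding table value, i.e.\ that $\min_{C \in \mathcal C(G-O;d)\otimes\{O\}}\cost(G-C) = T_{G-O}[d]$ and likewise $\min_{C \in \mathcal C(G-e_i;d)\otimes\{\{e_i\}\}}\cost(G-C) = T_{G-e_i}[d]$. For the first, a copy set $C$ in $\mathcal C(G-O;d)\otimes\{O\}$ has the form $C = C' \cup O$ with $C' \in \mathcal C(G-O;d)$, and since $O$ is fixed we get $G - C = (G-O) - C'$; hence $\cost(G-C) = \cost((G-O)-C')$ and $\diff(G-C)=\diff((G-O)-C')=d$, so the minimum is precisely $T_{G-O}[d]$. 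The same argument works for each $e_i$: a copy set of the form $C'\cup\{e_i\}$ gives $G-C = (G-e_i)-C'$, and since $\diff$ is preserved, the inner minimum is $T_{G-e_i}[d]$. Substituting these equalities yields the claimed formula.

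I expect the main obstacle to be bookkeeping rather than any genuine difficulty: I must verify that the difference parameter $d$ is correctly inherited through the $\otimes$ operation, which is exactly where \Cref{lem:cycle-copyset-rules} has been carefully set up to keep $d$ fixed on both factors (the single edge $e_i$ or the whole set $O$ is fixed, contributing nothing to $\mathcal C(\cdot;d)$). One subtle point worth stating explicitly is that, because the fixed part of each copy set ($O$ or $\{e_i\}$) does not depend on the choice of $C'$, the graph $G-C$ depends only on $C'$ and the factorization $\cost(G-C)=\cost((G-O)-C')$ (resp.\ $\cost((G-e_i)-C')$) is \emph{exact}, not merely additive via \Cref{lem:signature-linear}; this is simpler than \Cref{lem:components} since no nontrivial split of the difference parameter across two variable factors is needed. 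With these observations the proof reduces to the chain of equalities above.
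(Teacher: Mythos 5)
Your proposal is correct and follows essentially the same route as the paper's proof: both apply the definition of $T_G[\cdot]$, substitute the copy-set decomposition of \Cref{lem:cycle-copyset-rules}, split the minimum over the union, and identify each inner minimum with $T_{G-O}[d]$ or $T_{G-e_i}[d]$ via the identity $G - (C' \cup O) = (G-O) - C'$ (resp.\ $G - (C' \cup \{e_i\}) = (G-e_i) - C'$). Your explicit remark that this identification is exact because the fixed part ($O$ or $\{e_i\}$) is independent of $C'$ is precisely the step the paper leaves implicit.
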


\begin{proof}
	Using the definition of $T_G[\cdot]$ and
	\Cref{lem:cycle-copyset-rules}, we find that
	\begin{eqnarray*}
		T_G[d] & = & \min_{C \in \mathcal C(G; d)} \cost(G - C) \\
		       & = & \min_{C \in (\mathcal C(G - O; d)\otimes\{ O \}) \cup \bigcup_{i=1}^k (\mathcal C(G - e_i; d) \otimes \{ \{ e_i \} \}) } \cost(G - C).
	\end{eqnarray*}
	As we minimize $\cost$ over a union of sets, we can minimize it over
	the sets individually and then take the minimum of the results.
	Hence, we find that
	\begin{equation*}
		\min_{C \in \mathcal C(G - O; d) \otimes \{ O \} } \cost(G - C)
		= \min_{C \in \mathcal C(G - O; d)} \cost(G - O - C)
		= T_{G - O}[d]
	\end{equation*}
	and
	\begin{equation*}
		\min_{C \in \mathcal C(G - e_i; d) \otimes \{ \{ e_i \} \} } \cost(G - C) =
		\min_{C \in \mathcal C(G - e_i; d) } \cost(G - e_i - C) =
		T_{G - e_i}[d],
	\end{equation*}
	which together yield the claim.
\end{proof}

\begin{lemma}
	\label{lem:running-time-cycle}
	Let $G = (V, E)$ be a connected \rtg containing a directed cycle.  The
	table $T_G[\cdot]$ can be computed in $O(n^4)$ time.
\end{lemma}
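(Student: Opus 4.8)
The plan is to bound the running time by counting, for each recursive application of \Cref{lem:cycle-rtg}, how many subgraphs arise and how expensive it is to combine their tables. First I would observe that the decomposition in \Cref{lem:cycle-rtg} reduces the computation of $T_G[\cdot]$ for a connected \rtg $G$ containing a cycle $K$ to computing the tables $T_{G-O}[\cdot]$ and $T_{G-e_i}[\cdot]$ for $i = 1, \dots, k$. The crucial structural fact is that removing either $O$ (all outgoing non-cycle edges of vertices on $K$) or a single edge $e_i$ of the cycle destroys the cycle: in both cases the resulting graph $G-O$ respectively $G-e_i$ no longer contains $K$, and since $G$ contains only a single cycle, each of these subgraphs is either a tree \rtg or a disconnected \rtg whose components are trees. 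Thus each subgraph table can be computed using the machinery already established in \Cref{lem:running-time-tree} and \Cref{lem:running-time-disconnected}.

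Next I would carry out the counting. There are $k+1$ subgraphs to consider, namely $G-O$ and the $k$ graphs $G-e_i$, and $k \le n$ since the $e_i$ are distinct edges of $K$. Each such subgraph has at most $n$ vertices, so computing its table costs $O(n^3)$ time by \Cref{lem:running-time-tree} (the disconnected case is dominated by this bound, as \Cref{lem:running-time-disconnected} contributes only $O(n^2)$ per combination). Having the $k+1$ tables in hand, the final combination via \Cref{lem:cycle-rtg} takes a pointwise minimum over $k+1$ tables, each of size $O(n)$, for a total of $O(kn) = O(n^2)$, which is negligible. Hence the dominant cost is computing $O(n)$ tree tables, each in $O(n^3)$ time, giving the claimed bound of $O(n \cdot n^3) = O(n^4)$.

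The main obstacle I anticipate is justifying cleanly that each $G-e_i$ and $G-O$ is genuinely a tree \rtg (or a disjoint union thereof) so that \Cref{lem:running-time-tree} applies as a black box. This requires arguing that $G$, being connected with a single cycle, has exactly as many edges as vertices, and that deleting one cycle edge yields an acyclic connected graph whose edge orientations still admit a unique root, while deleting $O$ may disconnect the graph into several rooted trees. One must also confirm that the subgraphs are still valid \rtgs on which the earlier algorithms operate, i.e.\ that the in-degree condition is preserved under edge deletion, which is immediate. Once this structural observation is in place, the running-time accounting is routine and the $O(n^4)$ bound follows directly.
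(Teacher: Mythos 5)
There is a genuine gap in your treatment of $G-O$. By definition, $O$ consists of the edges that leave a vertex of the cycle $K$ but are \emph{not} themselves edges of $K$; deleting these edges therefore does \emph{not} destroy the cycle. The graph $G-O$ contains $K$ intact as one of its connected components, together with the trees that previously hung off the cycle. Your claim that ``the resulting graph $G-O$ \ldots{} no longer contains $K$'' and that all components of $G-O$ are trees is false, and as a consequence you cannot invoke \Cref{lem:running-time-tree} as a black box for $G-O$: the bottom-up dynamic program of that lemma requires a rooted tree whose root has in-degree $0$, which a directed cycle is not. As written, your argument has no way to produce the table for the component $K$.

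The missing observation (which is how the paper handles this) is that $K$ by itself is an \odrtg: every vertex of $K$ has out-degree $1$, so the only copy set of $K$ is the empty set, and hence $T_K[\cdot]$ has exactly one finite entry, namely $T_K[\diff(K)] = \cost(K)$, computable in constant time. The table $T_{G-O}[\cdot]$ is then obtained by combining $T_K[\cdot]$ with the tables of the tree components via \Cref{lem:running-time-disconnected} in $O(n^2)$ time. With this repair, the rest of your accounting is exactly the paper's proof: each $G-e_i$ is a tree \rtg whose table costs $O(n^3)$ by \Cref{lem:running-time-tree}, there are $k \le n$ of them, and the final combination via \Cref{lem:cycle-rtg} costs $O(n^2)$, giving $O(n^4)$ in total.
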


\begin{proof}
	Let $e_1, \ldots, e_k$ be the edges of the cycle $K$.  First, observe that
	$G - e_i$ is a tree for $i = 1, \ldots, k$.  Hence, we can compute
	each table $T_{G - e_i}[\cdot]$ in $O(n^3)$ time
	by \Cref{lem:running-time-tree}.  Thus, computing all these tables
	takes $O(n^4)$ time.

	Second, let
	$O = \{ (u, v) \in E \mid u \in K, (u, v) \not\in K \}$.  The graph
	$G - O$ is the disjoint union of the cycle $K$ and several tree \rtgs
	$G_1, \ldots, G_t$.  The table $T_K[\cdot]$ has only one finite entry and
	can be computed in constant time.  The tables $T_{G_i}[\cdot]$ can be
	computed in $O(n^3)$ time.  Using \Cref{lem:running-time-disconnected},
	we then compute $T_{G - O}[\cdot]$ in quadratic time.

	With these tables available, we can compute $T_G[\cdot]$ according to
	\Cref{lem:cycle-rtg}.  This takes $O(n^2)$ time.  The overall running time
	is thus $O(n^4)$.
\end{proof}

\subsection{Putting Things Together}
\label{sec:putting-things-together}

To compute $T_G[\cdot]$ for an arbitrary \rtg $G$, we first compute $T_K[\cdot]$
for each connected component $K$ of $G$ using Lemmas~\ref{lem:running-time-tree}
and~\ref{lem:running-time-cycle}.  Then, we compute $T_G[\cdot]$ using
\Cref{lem:running-time-disconnected} and the length of an optimal shuffle code
using \Cref{lem:graph-optimal}.  To actually compute the shuffle code,
we augment
the dynamic program computing $T_G[\cdot]$ such that an optimal copy set $C$
can be found by backtracking in the tables.  An optimal shuffle code is then
constructed by applying \greedy to $G - C$ and adding one copy operation for each
edge in $C$.

\begin{theorem}
	\label{thm:optimal-shuffle-code}
	Given an \rtg $G$, an optimal shuffle code can be computed in $O(n^4)$
	time.
\end{theorem}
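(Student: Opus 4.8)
The plan is to assemble the machinery developed in \Cref{sec:general-case} into a single algorithm and bound its running time. By \Cref{lem:graph-optimal}, once the two tables $T_G^1[\cdot]$ and $T_G^2[\cdot]$ are available, the length of an optimal shuffle code is obtained by a single $O(n)$-time scan over their entries. Since both $\cost^1$ and $\cost^2$ are signature-linear, the generic table-computation procedure for an arbitrary linear cost function $\cost$ applies verbatim to each of them, so computing both tables merely doubles the work. Hence it suffices to show that $T_G[\cdot]$ can be computed in $O(n^4)$ time for a fixed linear cost function, and then to recover an actual optimal copy set by backtracking.

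First I would reduce to connected components. Because every vertex of an \rtg has in-degree at most~$1$, a connected component with $n_j$ vertices has at most $n_j$ edges, and being connected it has at least $n_j - 1$ edges. Thus each component is either a tree (with $n_j - 1$ edges) or a connected \rtg containing exactly one directed cycle (with $n_j$ edges). For a tree component I compute its table in $O(n_j^3)$ time by \Cref{lem:running-time-tree}, and for a cyclic component in $O(n_j^4)$ time by \Cref{lem:running-time-cycle}. Summing over components gives $\sum_j O(n_j^4) = O(n^4)$ since $\sum_j n_j = n$. I then merge the component tables into $T_G[\cdot]$ using \Cref{lem:running-time-disconnected} in $O(n^2)$ additional time, so the entire table computation runs in $O(n^4)$.

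It remains to output an actual shuffle code rather than merely its length. To this end I would augment every dynamic program, namely the component recurrences of \Cref{lem:combine-children} and \Cref{lem:cycle-rtg} together with the merge recurrence of \Cref{lem:components}, so that each table entry additionally stores a pointer to the choice attaining its minimum. Starting from the entry selected by \Cref{lem:graph-optimal}, backtracking through these pointers reconstructs a copy set $C \in \mathcal C(G)$ realizing the optimum; this adds only constant overhead per table entry and hence does not affect the $O(n^4)$ bound. By \Cref{lem:cut-copies}, $G - C$ is an \odrtg, so applying \greedy to $G - C$ produces an optimal shuffle code for it in linear time by \Cref{thm:greedy-correct}; appending one copy operation for each edge of $C$ then yields a shuffle code for $G$ whose length matches the value computed above and is therefore optimal.

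The genuine work of the paper lies in the earlier lemmas, so for this theorem the delicate points are organizational. The hard part will be verifying the tree/unicyclic dichotomy so that exactly one of \Cref{lem:running-time-tree} and \Cref{lem:running-time-cycle} applies to each component, and checking that the running time aggregates to $O(n^4)$ rather than accumulating a spurious factor from the number of components. The inequality $\sum_j n_j^4 \le (\sum_j n_j)^4 = n^4$ settles the latter, while the correctness of the reconstructed code is immediate from \Cref{lem:cut-copies} and \Cref{thm:greedy-correct}, which I expect to be the least troublesome part.
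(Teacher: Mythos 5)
Your proposal is correct and follows essentially the same route as the paper's own proof: per-component tables via \Cref{lem:running-time-tree} and \Cref{lem:running-time-cycle}, merging with \Cref{lem:running-time-disconnected}, obtaining the optimal length from \Cref{lem:graph-optimal}, backtracking to extract an optimal copy set, and finally running \greedy on $G - C$ and appending copies. The only detail you gloss over, which the paper states explicitly, is that the copy appended for an edge $(u,v) \in C$ must have source $\pi(u)$ (where $\pi$ is the composition of the permutations produced by \greedy) rather than $u$, since the permutation operations have already moved the value originally stored in $u$.
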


\begin{proof}
	We compute all tables $T_C[\cdot]$, where $C$ is a connected component of $G$,
	in $O(n^4)$ time using \Cref{lem:running-time-tree} and
	\Cref{lem:running-time-cycle}.  Using \Cref{lem:running-time-disconnected},
	we then compute $T_G[\cdot]$ in $O(n^2)$ time.  From this, we can compute
	the length of an optimal shuffle code by \Cref{lem:graph-optimal}.

	In fact, it is not difficult to modify the dynamic program in a way that,
	given an entry $T_G[d]$, a corresponding copy set $C$ of $G$ with
	$\cost(G - C) = T_G[d]$ can be computed by backtracking in the tables.
	Hence, to compute an optimal shuffle code for $G$, we first compute an
	optimal copy set $C_{\mathrm{opt}}$ of $G$ in $O(n^4)$ time.  Then, we
	compute an optimal shuffle code
	$\pi_1, \ldots, \pi_k$ for $G - C_{\mathrm{opt}}$ using $\greedy$, which
	takes linear time according to \Cref{thm:greedy-correct}.  Let
	$\pi = \pi_k \circ \ldots \circ \pi_1$.  For each edge
	$(u, v) \in C_{\mathrm{opt}}$, we define a corresponding copy operation
	$\pi(u) \to v$.  Let $c_1, \ldots, c_t$ be these copy operations in
	arbitrary order.  Then the sequence
	$S = \pi_1, \ldots, \pi_k, c_1, \ldots, c_t$ is an optimal shuffle code.
	This can be seen as follows.  First, by \Cref{lem:graph-optimal}, the
	length of $S$ is minimal.  It remains to show that $S$ is indeed a shuffle
	code for $G$.  This is clearly true, as it first shuffles the values in
	the registers so that a subset of the values is in the correct position
	and then uses copy operations to transfer the remaining values to their
	destinations.
\end{proof}

\section{Conclusion}

We have presented an efficient algorithm for generating
optimal shuffle code using copy instructions and permutation instructions,
which allow to arbitrarily permute the contents of up to five registers.
As an intermediate result, we have proven the optimality of the greedy
algorithm for factoring a permutation into a minimal product of
permutations, each of which permutes up to five elements.  It would
be interesting to allow permutations of larger size.

\paragraph{Acknowledgments.}
This work was partly supported by the German Research Foundation (DFG)
as part of the Transregional Collaborative Research Center
\enquote{Invasive Computing} (SFB/TR 89).

\bibliographystyle{splncs03}
\bibliography{bibliography}

\end{document}